\newcommand\remove[1]{}
\newtheorem{theorem}{Theorem}
\newtheorem{lemma}[theorem]{Lemma}
\newtheorem{proposition}[theorem]{Proposition}
\theoremstyle{remark}
\newtheorem{definition}{Definition}[section]
\theoremstyle{remark}
\newtheorem{remark}{Remark}[section]
\newtheorem{example}{Example}
\newcommand\ff{{\mathbb F}}
\newcommand{\ba}{\begin{array}}
\newcommand{\ea}{\end{array}}
\newcommand{\be}{\begin{equation}}
\newcommand{\ee}{\end{equation}}
\newcommand{\bea}{\begin{eqnarray}}
\newcommand{\eea}{\end{eqnarray}}
\def\argmax{\text{argmax}}
\newcommand\nc\newcommand
\nc\bfa{{\boldsymbol a}}\nc\bfA{{\boldsymbol A}}\nc\cA{{\mathscr A}}
\nc\bfb{{\boldsymbol b}}\nc\bfB{{\boldsymbol B}}\nc\cB{{\mathscr B}}
\nc\bfc{{\boldsymbol c}}\nc\bfC{{\boldsymbol C}}\nc\cC{{\mathscr C}}
\nc\sC{{\mathscr C}}
\nc\bfd{{\boldsymbol d}}\nc\bfD{{\bfD}}
\nc\cD{{\mathscr D}}
\nc\bfe{{\boldsymbol e}}\nc\bfE{{\boldsymbol E}}\nc\cE{{\mathscr E}}
\nc\bff{{\boldsymbol f}}\nc\bfF{{\boldsymbol F}}\nc\cF{{\mathscr F}}
\nc\bfg{{\boldsymbol g}}\nc\bfG{{\boldsymbol G}}\nc\cG{{\mathscr G}}
\nc\bfh{{\boldsymbol h}}\nc\bfH{{\boldsymbol H}}\nc\cH{{\mathscr H}}
\nc\bfi{{\boldsymbol i}}\nc\bfI{{\boldsymbol I}}\nc\cI{{\mathscr I}}\nc\sI{{\mathscr I}}
\nc\bfj{{\boldsymbolj}}\nc\bfJ{{\boldsymbol J}}\nc\cJ{{\mathscr J}}
\nc\bfk{{\boldsymbolk}}\nc\bfK{{\boldsymbol K}}\nc\cK{{\mathscr K}}
\nc\bfl{{\boldsymboll}}\nc\bfL{{\boldsymbol L}}\nc\cL{{\mathscr L}}
\nc\bfm{{\boldsymbolm}}\nc\bfM{{\boldsymbol M}}\nc\cM{{\mathscr M}}
\nc\bfn{{\boldsymboln}}\nc\bfN{{\boldsymbol N}}\nc\cN{{\mathscr N}}
\nc\bfo{{\boldsymbolo}}\nc\bfO{{\boldsymbol O}}\nc\cO{{\mathscr O}}
\nc\bfp{{\boldsymbolp}}\nc\bfP{{\boldsymbol P}}\nc\cP{{\mathscr P}}
\nc\eP{{\EuScriptP}}\nc\fP{{\mathfrak P}}
\nc\bfq{{\boldsymbol q}}\nc\bfQ{{\boldsymbol Q}}\nc\cQ{{\mathscr Q}}
\nc\bfr{{\boldsymbol r}}\nc\bfR{{\boldsymbol R}}\nc\cR{{\mathscr R}}
\nc\bfs{{\boldsymbol s}}\nc\bfS{{\boldsymbol S}}\nc\cS{{\mathscr S}}
\nc\bft{{\boldsymbol t}}\nc\bfT{{\boldsymbol T}}\nc\cT{{\mathscr T}}
\nc\bfu{{\boldsymbol u}}\nc\bfU{{\boldsymbol U}}\nc\cU{{\mathscr U}}
\nc\bfv{{\boldsymbol v}}\nc\bfV{{\boldsymbol V}}\nc\cV{{\mathscr V}}
\nc\bfw{{\boldsymbol w}}\nc\bfW{{\bf W}}\nc\cW{{\mathcal W}}\nc\sW{{\mathscr W}}
\nc\bfx{{\boldsymbol x}}\nc\bfX{{\boldsymbol X}}\nc\cX{{\mathcal X}}\nc\sX{{\mathscr X}}
\nc\bfy{{\boldsymbol y}}\nc\bfY{{\boldsymbol Y}}\nc\cY{{\mathcal Y}}\nc\sY{{\mathscr Y}}
\nc\bfz{{\boldsymbol z}}\nc\bfZ{{\boldsymbol Z}}\nc\cZ{{\mathcal Z}}\nc\sZ{{\mathscr Z}}
\newcommand{\bfit}{\bfseries\itshape}
\renewcommand{\oplus}{+}
\begin{document}

\title{Construction of polar codes for arbitrary discrete memoryless channels\thanks{The authors are with Dept. of ECE and ISR, University of Maryland, College Park, MD 20742, USA. Emails: \{tcgulcu,yeemmi\}@gmail.com, abarg@umd.edu. A. Barg is also
with Inst. Probl. Inform. Trans. (IITP), Moscow, Russia. Research supported 
in part by
NSF grants CCF1217245 and CCF1422955.}}

\author{
Talha Cihad Gulcu
\hspace{0.5cm} 
\and 
Min Ye
\hspace{0.5cm}
\and 
Alexander Barg
}

\maketitle

\begin{abstract} 
It is known that polar codes can be efficiently constructed for binary-input channels. At the same time, existing
algorithms for general input alphabets are less practical because of high complexity. We address the construction 
problem for the general case, and analyze an algorithm that is based on successive reduction
of the output alphabet size of the subchannels in each recursion step. For this procedure we estimate the approximation error
as $O(\mu^{-1/(q-1)}),$ where $\mu$ is the ``quantization parameter,'' i.e., the maximum size of the subchannel output alphabet
allowed by the algorithm. The complexity of the code construction scales as $O(N\mu^2 \log \mu),$ where $N$ is the length of the code.

We also show that if the polarizing operation relies on modulo-$q$ addition, it is possible to merge subsets of output
symbols without any loss in subchannel capacity. Performing this procedure before each approximation step results in 
a further speed-up of the code construction, and the resulting codes have smaller gap to capacity. We also show that a similar 
acceleration can be attained for polar codes over finite field alphabets. 

Experimentation shows that the suggested construction algorithms can be used to construct long polar codes 
for alphabets of size $q=16$ and more with acceptable loss of the code rate for a variety of polarizing transforms.
\end{abstract}
\maketitle

\noindent{\bfit Index terms:} Channel degrading, Greedy symbol merging, Polarizing transforms.

\section{Introduction}

Ar{\i}kan's polar codes \cite{arikan2009} form the first explicit family of binary codes that achieve
the capacity of binary-input channels. 
Polar codes rely on a remarkable phenomenon called channel polarization. 
After their introduction, both polar codes and the channel polarization concept have been used in a vast range of problems in information theory \cite{abbe}--\cite{hassani}.
While \cite{arikan2009} described efficient encoding and decoding procedures of polar codes, it also noted that their construction 
presents a difficult algorithmic challenge because the alphabet of the bit subchannels grows exponentially as a function of the number of iterations of the polarization procedure. Approached straightforwardly, this results in an exponential complexity of the code construction.

The difficulty of selecting subchannels for information transmission with polar codes was recognized early on in a number of papers. 
According to an observation made in \cite{mori_tanaka2}, 
the construction procedure of polar codes for binary-input channels relies on essentially the same density evolution
procedure that plays a key role in the analysis of low-density parity-check codes.
It was soon realized that the proposal of \cite{mori_tanaka2} requires increasing precision of the computations, but this paper paved way for later research on the construction problem.

An important step was taken in \cite{tal_vardy} which suggested 
to approximate each bit-channel after each evolution step by its degraded or upgraded version whose output alphabet size is constrained by a specified threshold $\mu$ that serves as a parameter of the procedure. As a 
result, \cite{tal_vardy} put forward an approximation procedure that results in a code not too far removed from
the ideal choice of the bit-channels of \cite{arikan2009}. This code construction scheme has a complexity of $O(N \mu^2 \log \mu)$, where $N=2^n$ is the code length. 
For the channel degradation method described in \cite{tal_vardy}, an error analysis and approximation guarantees are provided in \cite{pedarsani}. 

Another approximation scheme for the construction of binary codes was considered in \cite{sasoglu5}.
It is based on degrading each bit-channel after each evolution step, performed by
merging several output symbols into one symbol based on quantizing the curve 
$p_{X|Y} (0|y)$ vs $h(p_{X|Y} (0|y)),$ where $p_{X|Y}$ is the conditional distribution of the ``reverse channel'' 
that corresponds to the bit-channel in question.  
Symbols of the output alphabet that share the same range of quantization are merged into a single symbol of the approximating channel.
Yet another algorithm based on bit-channel upgrading was described in \cite{ghayoori_gulliver}, 
in which the authors argue that it is possible to obtain a channel which is arbitrarily close to the bit-channel of interest in terms of the capacity. However, no error or complexity analysis is provided in this work.

Moving to general input alphabets, let us mention a code construction algorithm 
based on degrading the subchannels in each evolution step designed in \cite{tal_sharov_vardy}.
This algorithm involves a merging procedure of output symbols similarly to \cite{sasoglu5},
having a complexity $O(N \mu^2 \log \mu)$.
However, as noted by the authors, the construction scheme of \cite{tal_sharov_vardy} is practical only for small values of input alphabet size $q$ because the size of the output alphabet $\mu$ has to be of the form $\mu=\lambda^q$, where $\lambda$ is the number of quantiazation levels used in the merging process. Paper \cite{pereg_tal} proposed to 
perform the upgrading instead of degrading of the subchannels, but did not manage to overcome the
limitation imposed by the constraint $\mu=\lambda^q$.
In \cite{ghayoori_gulliver2}, the authors consider another channel 
upgrading method for nonbinary-input channels, but stop short of providing an explicit construction scheme or error analysis.

Papers \cite{trifonov,li_yuan,wu_li_sun} addressed the 
construction problem of polar codes for AWGN channels.  These works
are based on Gaussian approximation of the intermediate likelihood ratios and do not analyze
the error guarantees or rate loss of the obtained codes. A comparative study of various polar code 
constructions for AWGN channel is presented
in \cite{vangala}.
Some other heuristic constructions for binary-input channels similar to the cited
results for the Gaussian channel appear in \cite{kern,bonik,zhao}.
Note also constructions of
polar codes for some particular channels \cite{santos,chen},
for various transformation kernels \cite{zhang,trifonov2,serbetci}, and concatenated codes \cite{trifonov3,mahdavifar2}. 

In this paper we present a construction method of polar codes for input alphabets of arbitrary size, together 
with explicit analysis of approximation error and construction complexity.
\textcolor{black}{In particular, the complexity estimate of our procedure grows as $O(N\mu^2\log\mu)$.} 
Our algorithm can be viewed as a generalization of the channel degradation
method in \cite{tal_vardy} to nonbinary input channels. Although the approach and the proof methods here are rather 
different from earlier works, the estimate of the approximation error that we derive generalizes the error bound given
by \cite{pedarsani} for the binary case. Another interesting connection with the literature concerns
a very recent result of \cite{tal} which derives a lower bound on the alphabet size $\mu$ that is necessary
to restrict the capacity loss by at most a given value $\epsilon.$ This bound is valid for
any approximation procedure that is based only on the degrading of the subchannels in each evolution step.
The construction scheme presented here relies on the value $\mu$ that
is not too far from this theoretical limit (see Proposition \ref{third_lemma} for more details).
We stress that we aim at approximating symmetric capacity of the channels, and do not attempt to construct or implement polar codes
that attain Shannon capacity, which is greater than the symmetric one for non-symmetric channels.
 
 Our paper is organized as follows. In Section  \ref{prelim} we give a brief overview of polar codes 
including various polarizing transformations for nonbinary alphabets. The 
rate loss estimate in the code construction based on 
merging pairs of output symbols in a greedy way is derived in Section \ref{main}. 
In Section \ref{cyclic} we argue that
output symbols whose posterior probability vectors are cyclic shifts of each other can be merged with
no rate loss. This observation enables us to formulate an improved version of the construction algorithm that
further reduces the construction complexity. We have also implemented our algorithms and constructed
polar codes for various nonbinary alphabets. These results are presented in Section \ref{sim}. 
For relatively small $q$ we can construct rather long polar codes
(for instance, going to length $10^6$ for $q=5$ takes several hours). 
For larger $q$ such as 16 we can reach lengths of tens of thousands within reasonable time and with low rate loss.
Even in this case, by increasing the gap to capacity of the resulting codes, we can reach lengths in the range of hundreds of thousands to a million without putting an effort in optimizing our software.

\section{Preliminaries on Polar Coding}
\label{prelim}

We begin with a brief overview of binary polar codes.
Let $W$ be a channel with the output alphabet ${\mathscr Y},$ input alphabet
${\mathscr X}=\{0,1\},$ and the conditional probability distribution $W(\cdot|\cdot)=W_{Y|X}(\cdot|\cdot).$ Throughout the paper we denote the capacity and  the symmetric capacity of $W$ by $C(W)$ and $I(W)$, respectively. We say
$W$ is symmetric if $W (y|1), y\in {\mathscr Y}$ can be obtained from $W (y|0), y\in {\mathscr Y}$  
through a permutation $\pi:\sY\to\sY$ such that $\pi^2$ \textcolor{black}{is the identity mapping.} 
Note that if $W$ is symmetric then $I(W)=C(W).$

For $N=2^n$ and $n\in{\mathbb N}$, define the polarizing matrix (or the Ar{\i}kan transform matrix)
as $G_N=B_N F^{\otimes n}$, where $F=\text{\small{$\Big(\hspace*{-.05in}\begin{array}{c@{\hspace*{0.05in}}c}
    1&0\\[-.05in]1&1\end{array}\hspace*{-.05in}\Big)$}}$,
$\otimes$ is the Kronecker product of matrices, and $B_N$ is a 	``bit reversal'' permutation 
matrix \cite{arikan2009}. In \cite{arikan2009}, Ar{\i}kan showed that given a 
symmetric and binary input channel $W$, an appropriate subset of the rows of $G_N$ can be used as a generator matrix of a linear code that achieves the capacity of $W$ as $N\to\infty$.

Given a binary-input channel $W$, define the channel $W^N$ with input alphabet
$\{0,1\}^N$ and output alphabet ${\mathscr Y}^N$ by the conditional distribution
\begin{equation*}
W^N(y^N|x^N)= \prod_{i=1}^N W(y_i|x_i)
\end{equation*}
where \textcolor{black}{$W(\cdot|\cdot)$} is the conditional distribution that defines $W$.
Define a combined channel $\widetilde{W}$ by the conditional distribution
\begin{equation*}
\widetilde{W}(y^N|u^N)= W^N(y^N|u^N G_N).
\end{equation*}
In terms of $\widetilde{W}$, the channel seen by the $i$-th bit $U_i, i=1,\dots,N$
(also known as the bit-channel of the $i$-th bit) can be written as
\begin{equation}\label{eq:W}
\textcolor{black}{W_N^{(i)}} (y^N, u_1^{i-1}|u_i)= \frac{1}{2^{n-1}} \sum_{\widetilde{u}\in \{0,1\}^{n-i}} \widetilde{W} (y^N| (u_1^{i-1},u_i, \widetilde{u})).
\end{equation}
We see that $W_i$ is the conditional distribution of $(Y^N, U_1^{i-1})$ given $U_i$
provided that the channel inputs $X_i$ are uniformly distributed for all $i=1,\dots,N$.
Moreover, it is the case that \cite{arikan2009} the bit-channels $W_i$ can be
constructed recursively using the channel transformations $W^{-}$ and $W^{+}$,
which are defined by the equations
\begin{align}
W^{-}(y_1,y_2|u_1)&\triangleq\frac{1}{2}\sum_{u_2\in\{0,1\}} W(y_1|u_1\oplus u_2) W(y_2|u_2) \label{eq:+}\\
W^{+}(y_1,y_2,u_1|u_2) &\triangleq \frac{1}{2} W(y_1|u_1\oplus u_2) W(y_2|u_2).\label{eq:-}
\end{align}

\remove{Given a binary RV $X$ and a discrete RV $Y$ supported on ${\mathscr Y}$,
define the Bhattacharyya parameter $Z(X|Y)$ as 
$
Z(X|Y)=2\sum_{y\in{\mathscr Y}} P_Y(y) \sqrt{{P_{X|Y}(0|y) P_{X|Y}(1|y)}}.
$
The value $Z(X|Y), 0\le Z(X|Y)\le 1$ measures the amount of randomness in $X$ given $Y$ in the sense
that if it is close to zero, then $X$ is almost constant given $Y$, while if it is close to one, then $X$ is almost uniform on $\{0,1\}$ given $Y$.}
The Bhattacharyya parameter $Z(W)$ of a binary-input channel $W$ is defined as
$
Z(W)=\sum_{y\in {\mathscr Y}} \sqrt{W_{Y|X}(y|0) W_{Y|X}(y|1)}.
$
The bit-channels defined in \textcolor{black}{\eqref{eq:+}--\eqref{eq:-}} are partitioned into good channels ${\mathscr G}_N (W,\beta)$
and bad channels ${\mathscr B}_N (W,\beta)$ based on the values of $Z(\textcolor{black}{W_N^{(i)}})$.
More precisely, we have
\begin{equation}\label{good-bad}
\begin{aligned}
{\mathscr G}_N (W,\beta)&= \{ i\in [N]: Z(W_N^{(i)}) \leq 2^{-N^{\beta}} \}\\
{\mathscr B}_N (W,\beta)&= \{ i\in [N]: Z(W_N^{(i)})  > 1-2^{-N^{\beta}} \} ,
\end{aligned}
\end{equation}
where $[N]=\{1,2,\dots,N\}.$ 
As shown in \cite{ari09a}, for any binary-input channel $W$ and
any constant $\beta<1/2,$
      \begin{equation}\label{telatar}
   \begin{aligned}
     \lim_{N\to\infty} \frac{|{\mathscr G}_N (W,\beta)|}{N}&= I(W)\\
     \lim_{N\to\infty} \frac{|{\mathscr B}_N (W,\beta)|}{N}&= 1-I(W).
     \end{aligned}
   \end{equation}
Based on this equality, information can be transmitted over the \textcolor{black}{good bit-channels} while the remaining
bits are fixed to some values known in advance to the receiver (in polar coding literature they are called \emph{frozen bits}).  
The transmission scheme can be described as follows: 
A message of $k=|{\mathscr G}_N (W,\beta)|$ bits is written in the bits $u_i, i\in {\mathscr G}_N (W,\beta).$ 
The remaining $N-k$ bits are set to 0. This determines the sequence $u^N$ which
is transformed into $x^N=u^N G_N,$ and the vector $x^N$ is sent over the channel. Denote by $y^N$
the sequence received on the output.
The decoder finds an estimate of $u^N$ by computing the values $\hat u_i, i=1,\dots,N$ as follows:
\begin{align}\label{eq:u}
\hat{u}_i=
\begin{cases}
\argmax_{u\in \{0,1\}}W_i (y^N,\hat{u}_1^{i-1}|u), &\text{if} \, i\in {\mathscr G}_N (W,\beta),  \\
0, &\text{if} \, i\in {\mathscr B}_N (W,\beta).
\end{cases}
\end{align}
The results of \cite{arikan2009,ari09a} imply the following upper bound on the error probability $P_e=\Pr(\hat u^N\ne u^N):$
\begin{equation}\label{eq:ep}
P_e \leq \sum_{i\in {\mathscr G}_N (W,\beta)} Z(W_N^{(i)})  \leq N 2^{-N^{\beta}}
\end{equation}
where $\beta=\frac12-\epsilon,$ and $\epsilon>0$ is arbitrarily small. 
This describes the basic construction of polar codes \cite{arikan2009} which  
attains symmetric capacity $I(W)$ of the channel $W$ with a low error rate.
At the same time, \eqref{eq:W}, \eqref{eq:u} highlight the main obstacle in the way of efficiently constructing polar codes: the size of the output alphabet of the channels $W_i$ is of the order \textcolor{black}{$|\sY|^{N},$} so it scales exponentially with the code length.
For this reason, finding a practical code construction scheme of polar codes represents a nontrivial problem.

Concluding the introduction, let us mention that the code construction technique presented below can be applied to any polarizing transform
based on combining pairs of subchannels. There has been a great deal of 
research on properties of polarizing operations in general. 
In particular, it was shown in \cite{sasoglu3} that \eqref{eq:ep} holds true whenever the 
input alphabet size $q$ of the channel $W$ is a prime number, and $W^{-}$ and $W^{+}$
are defined as
\begin{align}
W^{-}(y_1,y_2|u_1)&\triangleq\frac{1}{q}\sum_{u_2\in\{0,1,\dots,q-1\}} \!\!\!\!W(y_1|u_1\oplus u_2) W(y_2|u_2) \label{eq:+q}\\
W^{+}(y_1,y_2,u_1|u_2) &\triangleq \frac{1}{q} W(y_1|u_1\oplus u_2) W(y_2|u_2),\label{eq:-q}
\end{align}
meaning that
Ar{\i}kan's transform 
$F=\text{\small{$\Big(\hspace*{-.05in}\begin{array}{c@{\hspace*{0.05in}}c}
1&0\\[-.05in]1&1\end{array}\hspace*{-.05in}\Big)$}}$ 
 is polarizing for prime alphabets. For the case when $q$ is a power of a prime, it
was proved in \cite{mori_tanaka} that there exist binary linear transforms different from 
$F$ that support the estimate in \eqref{eq:ep} for some exponent $\beta$ that depends on $F$.
For example, \cite{mori_tanaka} shows that the transform  
\remove{$G_{\gamma}=\text{\small{$\Big(\hspace*{-.05in}\begin{array}{c@{\hspace*{0.05in}}c}
   1&0\\-.05in]\gamma&1\end{array}\hspace*{-.05in}\Big)$}}$}
   \begin{equation}\label{eq:mt}
     G_\gamma=\begin{pmatrix} 1&0\\ \gamma&1
     \end{pmatrix}
   \end{equation}
is polarizing whenever $\gamma$ is a primitive element of the field ${\mathbb F}_q$.
Paper \cite{park_barg} considered the use of Ar{\i}kan's transform for the
channels with input alphabet of size $q=2^r$, showing that
the symmetric capacities of the subchannels converge to one of $r+1$ integer values in the set
$\{0,1,\dots,r\}.$ 

Even more generally, necessary and sufficient conditions for a binary operation $f:{\mathcal X}^2\to{\mathcal X}^2$ given by
\begin{align}\label{eq:ff}
u_1&=f(x_1,x_2), \\
u_2&=x_2.\nonumber
\end{align}
to be a polarizing mapping were identified in \cite{nasser1}. A simple set of 
{\em sufficient} conditions for the same was given in \cite{sasoglu4}, which also gave a
concrete example of a polarizing mapping for an alphabet of arbitrary size $q.$ According to
\cite{sasoglu4}, in \eqref{eq:ff} one can take $f$ in the form
$f(x_1,x_2)= x_1+\pi (x_2)$, where $\pi: {\mathcal X} \to {\mathcal X}$
is the following permutation:
\begin{align}
\pi(x)= \begin{cases}
\lfloor q/2  \rfloor,   &\text{if} \quad  x=0, \\
x-1, & \text{if} \quad 1\leq x \leq \lfloor q/2  \rfloor, \\
x, & \text{otherwise}.
\end{cases}\label{eq:pi}
\end{align}
We include experimental results for code construction using the transforms \eqref{eq:mt} and \eqref{eq:pi} 
in Sect.~\ref{sim}.

Finally recall that it is possible to attain polarization based on transforms that combine $l> 2$
subchannels. In particular, polarization results for transformation kernels of 
size $l\times l$ with $l>2$ for binary-input channels were studied in \cite{korada2}. 
Apart from that,  \cite{mori_tanaka} derived estimates of the error probability of polar codes 
for nonbinary channels based on transforms defined by generator matrices of Reed-Solomon codes. 
However, below we will restrict our attention to binary combining operations of the form discussed above.

\section{Channel Degradation and the Code Construction Scheme}
\label{main}

In the algorithm that we define, the subchannels 
are constructed recursively, and after each evolution step
the resultant channel is replaced by its degraded version which has an output alphabet
size less than a given threshold $\mu$. In general terms, this procedure is described in more detail as follows.

\begin{algorithm}
\caption{Degrading of subchannels}\label{euclid}
\hspace{-0.0in}\textbf{input:} DMC $W$, bound on the output size $\mu$, code length $N=2^n$, 
channel index $i$ with binary representation 

\hspace{-0.0in}$i=\langle b_1,b_2,\dots b_n\rangle_2.$

\hspace{-0.0in}\textbf{output:} A DMC obtained from the subchannel $W_N^{(i)}$.
\begin{algorithmic}
\State $T_N^{(i)}$ $\gets \texttt{degrade} (W,\mu)$
\For{$j=1,2,\dots,n$} 
\If{ $b_j=0$ }
\State $T_N^{(i)}$ $\gets T^{-}$
\State \textbf{else}
\State $T_N^{(i)}$ $\gets T^{+}$
\EndIf
\State $T_N^{(i)}$ $\gets \texttt{degrade}(T,\mu)$
\EndFor
\State \textbf{return} $T_N^{(i)}$
\end{algorithmic}
\end{algorithm}

Before proceeding further we note that  $T^{-}$ and $T^{+}$ appearing in Algorithm \ref{euclid} can be any transformations
that produce combined channels for the polarization procedure. The possibilities range from Ar{\i}kan's transform
to the schemes discussed in the end of Section \ref{prelim}.

The next step is to define the function $\texttt{degrade}$ in such a way that it 
can be applied to general discrete channels. 
Ideally, the degrading-merge operation 
should optimize the 
degraded channel by attaining the smallest rate loss over all $T':$ 
\begin{equation}
\inf_{ \substack{ T': \,T' \prec W \\  |\text{out}(T')| \leq \mu  } } I(W)-I(T') \label{con_max}
\end{equation}
Equation \eqref{con_max} defines a convex maximization problem, which is difficult to solve
with reasonable complexity. To reduce the computational load, \cite{tal_vardy} proposed the following approximation
to \eqref{con_max}: replace $y,y'\in \sY$ by a single symbol
if the pair $y,y'$ gives the minimum loss of capacity among all pairs of output symbols, and repeat this
as many times as needed until the number of the remaining output symbols is equal to or 
less than $\mu$ (see Algorithm C in \cite{tal_vardy}). In \cite{pedarsani,sasoglu5} this procedure was called {\em greedy mass merging}. 
In the binary case this procedure can be implemented with complexity $O(N \mu^2 \log \mu)$ because one
can check only those pairs of symbols $(y_1,y_2)$ which  are closest to each other in terms
of the likelihood ratios (see Theorem 8 in \cite{tal_vardy}), \textcolor{black}{and there are
$$N+\frac{N}{2}+\dots+2=2N-2=O(N)$$ virtual channels in total for which this procedure needs to be carried out.} This simplification does not
generalize to the channels with nonbinary inputs, meaning that we need to inspect
all pairs of symbols. Since the total number of pairs is $O(\mu^4)$ after each evolution step, 
the overall complexity of the greedy mass merging algorithm for nonbinary input alphabets is at most $O(N \mu^4 \log \mu)$.

\textcolor{black}{There is a faster way to perform the search for closest pairs in a metric space \cite{BentleyShamos}, relying on which
the complexity of each evolution step can be estimated as $O(\mu^2 \log \mu),$ \textcolor{black}{although the implicit constant
grows rapidly with the size of the output alphabet.} Thus, the overall complexity of our algorithm is 
$O(N \mu^2 \log \mu)$, as claimed earlier.
} 

For a channel $W:\sX \to \sY$ define 
\begin{align*}
P_W(x|y)&=\frac{W(y|x)}{\sum_{x_0\in\sX}W(y|x_0)},\\
P_Y(y)&=\frac{1}{q}\sum_{x_0\in\sX}W(y|x_0)
\end{align*}
for all $x\in\sX$ and $y\in\sY.$ For a subset $A\subseteq\sY,$ define
$$
P_Y(A)=\sum_{y\in A}P_Y(y).
$$

In the following lemma we establish
an upper bound on the rate loss of the greedy mass merging algorithm for nonbinary input alphabets.

\begin{lemma}\label{lemma:defmerge}
 Let $W:{\mathscr X}\to{\mathscr Y}$ be a discrete memoryless channel and 
let $y_1, y_2\in \sY$ be two output symbols. Let 
$\tilde{W}: {\mathscr X}\to {\mathscr Y}\backslash  \{y_1,y_2\} \cup\{y_\text{merge}\}$ be the
channel that is obtained from by $W$ \textcolor{black}{by replacing $y_1$ and $y_2$ with a new symbol $y_\text{merge}$} and that has the transition probabilities
\begin{align*}
\tilde{W}(y|x)=\begin{cases}
W(y|x), &\text{if} \quad y\in {\mathscr Y}\backslash \{y_1,y_2\} \\
W(y_1|x)+W(y_2|x), &\text{if} \quad y=y_\text{merge}
\end{cases}.
\end{align*}
Then
\begin{equation}
0\leq I(W)-I(\tilde{W}) \leq \frac{P_Y(y_1)+ P_Y(y_2)}{\ln 2}\, \sum_{x\in\sX} |P_W(x|y_1)-P_W(x|y_2)|  .\label{merge_ineq}
\end{equation}

\label{first_lemma}
\end{lemma}
\begin{proof}
Since $\tilde{W}$ is degraded with respect to $W$, we clearly have that $I(W)\ge I(\tilde{W}),$ where $I(\cdot)$ is the 
symmetric capacity.
To prove the upper bound for $I(W)-I(\tilde{W})$ in \eqref{merge_ineq} let $X$ be the random variable uniformly distributed
on ${\mathscr X}$, and let $Y$ be the random output of $W$. Then we have
   \begin{align}
I(W)-I(\tilde{W})& = \Big(H(X)- \sum_{y\in{\mathscr Y}}H(X|Y=y) P_Y(y) \Big)\nonumber\\[.1in]
- \Big( H(X)
-&H(X|Y\in\{y_1,y_2\}) (P_Y(y_1)+P_Y(y_2))-  \!\!\!\!\!\!\!\! \sum_{y\in{\mathscr Y}\backslash \{y_1,y_2\} } \!\!\!\!\!\!\!\! 
H(X|Y=y) P_Y(y) \Big)
    \nonumber\\[.1in]
&= H(X|Y\in\{y_1,y_2\}) (P_Y(y_1)+P_Y(y_2))\nonumber\\&\hspace*{1in}- H(X| Y=y_1) P_Y(y_1)- H(X|Y=y_2) P_Y(y_2). \label{ineq1}
    \end{align}
Next we have
    \begin{align*}
\Pr(X=x|Y\in\{y_1,y_2\})&= \frac{ \frac{1}{|{\mathscr X}|} (W(y_1|x)+W(y_2|x)) }{P_Y(y_1)+P_Y(y_2) } \\
&= \frac{ \frac{1}{|{\mathscr X}|} W(y_1|x) }{P_Y(y_1)+P_Y(y_2) } + \frac{ \frac{1}{|{\mathscr X}|} W(y_2|x) }{P_Y(y_1)+P_Y(y_2) } \\
&=\frac{P_Y(y_1)}{P_Y(y_1)+P_Y(y_2)} P_W (x|y_1) + \frac{P_Y(y_2)}{P_Y(y_1)+P_Y(y_2)} P_W (x|y_2) \\
&=\alpha_{12} P_W (x|y_1)+ (1-\alpha_{12}) P_W (x|y_2)
    \end{align*}
where $\alpha_{12}\triangleq \frac{P_Y (y_1)}{P_Y(y_1)+P_Y(y_2)}$. Hence, it follows from \eqref{ineq1} that
   \begin{align*}
I(W)-I(\tilde{W})= &(P_Y(y_1)+P_Y(y_2)) \sum_{x\in{\mathscr X}} \left[\alpha_{12} P_W (x|y_1)+ (1-\alpha_{12}) P_W (x|y_2) \right] \\
    &\times\log_2 \frac{1}{\alpha_{12} P_W (x|y_1)+ (1-\alpha_{12}) P_W (x|y_2)} \\
    &
- P_Y(y_1) \sum_{x\in{\mathscr X}} P_W (x|y_1) \log_2 \frac{1}{P_W (x|y_1)}
- P_Y(y_2) \sum_{x\in{\mathscr X}} P_W (x|y_2) \log_2 \frac{1}{P_W (x|y_2)}.
     \end{align*}
Rearranging the terms, we obtain
\begin{align*}
    I(W)-I(\tilde{W})= &P_Y(y_1) \sum_{x\in{\mathscr X}} P_{W}(x|y_1) \log_2 
\frac{P_{W}(x|y_1)}{\alpha_{12} P_{W}(x|y_1)+ (1-\alpha_{12}) P_{W}(x|y_2)} \\
&\quad+ P_Y(y_2)  \sum_{x\in {\mathscr X}} P_{W}(x|y_2)  \log_2 
\frac{P_{W}(x|y_2)}{\alpha_{12} P_{W}(x|y_1)+ (1-\alpha_{12}) P_{W}(x|y_2)}.
\end{align*}
Next use the inequality $\ln x\leq x-1$ to write
    \begin{align*}
  I(W)-I(\tilde{W})& \leq P_Y(y_1) \sum_{x\in{\mathscr X}}  \frac{P_{W}(x|y_1)}{\ln 2} 
\left(\frac{P_{W}(x|y_1)}{\alpha_{12} P_{W}(x|y_1)+ (1-\alpha_{12}) P_{W}(x|y_2)}-1\right) \\
&\quad+ P_Y(y_2) \sum_{x\in {\mathscr X}}  \frac{P_{W}(x|y_2)}{\ln 2}  
\left(\frac{P_{W}(x|y_2)}{\alpha_{12} P_{W}(x|y_1)+ (1-\alpha_{12}) P_{W}(x|y_2)}-1\right)
    \end{align*}
which simplifies to
\begin{align}
I(W)-I(\tilde{W}) &\leq \frac{P_Y(y_1)}{\ln 2} \sum_{x\in{\mathscr X}} P_{W}(x|y_1) 
\frac{(1-\alpha_{12}) (P_{W}(x|y_1) -P_{W}(x|y_2))}{ \alpha_{12} P_{W}(x|y_1)+ (1-\alpha_{12}) P_{W}(x|y_2) } \nonumber\\
&\quad+  \frac{P_Y(y_2)}{\ln 2} \sum_{x\in{\mathscr X}} P_{W}(x|y_2)
\frac{ \alpha_{12} (P_{W}(x|y_2)- P_{W}(x|y_1))}{ \alpha_{12} P_{W}(x|y_1)+ (1-\alpha_{12}) P_{W}(x|y_2)}. \label{ineq2}
\end{align}
Bound the first term in \eqref{ineq2} using the inequality
\begin{align*}
\left| \frac{(1-\alpha_{12}) (P_{W}(x|y_1) -P_{W}(x|y_2))}{ \alpha_{12} P_{W}(x|y_1)+ (1-\alpha_{12}) P_{W}(x|y_2) }   
\right| 
\leq \frac{(1-\alpha_{12}) |P_{W}(x|y_1) -P_{W}(x|y_2)|}{ \alpha_{12} P_{W}(x|y_1)} 
\end{align*}
and do the same for the second term. We obtain the estimate

\begin{align*}
I(W)-I(\tilde{W})&\leq \frac{P_Y(y_1)}{\ln 2} \,\,\frac{1-\alpha_{12}}{\alpha_{12}} 
\sum_{x\in{\mathscr X}} |P_{W}(x|y_1) -P_{W}(x|y_2)| \\
&\quad+ \frac{P_Y(y_2)}{\ln 2} \,\,\frac{\alpha_{12}}{1-\alpha_{12}} 
\sum_{x\in{\mathscr X}} |P_{W}(x|y_1) -P_{W}(x|y_2)| \\
&= \frac{P_Y(y_1)+P_Y(y_2)}{\ln 2} ||P_{W}(.|y_1)-P_{W}(.|y_2)||_1.
\end{align*}
This completes the proof of \eqref{merge_ineq}.
\end{proof}

The bound \eqref{merge_ineq} brings in metric properties of the probability vectors. 
Leveraging them, we can use simple volume arguments to bound the rate loss due to approximation.

\begin{lemma}
Let the input and output alphabet sizes of $W$ be $q$ and $M$, respectively. Then,
there exists a pair of output symbols $(y_1,y_2)$ such that \textcolor{black}{for any $q\ge 2$}
   \begin{align}
 &P_Y(y_1)\leq \frac{2}{M}, \quad P_Y(y_2)\leq \frac{2}{M} , \label{eq:Y}\\
& ||P_{W}(.|y_1)-P_{W}(.|y_2)||_1\leq 
\textcolor{black}{
\frac{2}{\big(\frac M2\big)^{\frac 1{q-1}}-q}
}
\label{eq:pm}
\end{align}
which implies the estimate
\begin{equation}\label{eq:I}
0\leq I(W)-I(\tilde{W}) \leq
\textcolor{black}{
\frac {8}{\ln 2}\frac {1}{M\big( \big(\frac M2\big)^{\frac 1{q-1}}-q\big)}.}
\end{equation}
\label{second_lemma}
\end{lemma}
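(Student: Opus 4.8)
The plan is to exploit a pigeonhole/volume argument on the $q-1$ dimensional probability simplex. For each output symbol $y$, the posterior vector $P_W(\cdot|y)=(P_W(x|y))_{x\in\sX}$ lies in the standard simplex $\Delta_{q-1}=\{p\in\mathbb{R}^q_{\ge0}:\sum_x p_x=1\}$, which is a set of (Euclidean, or equivalently $\ell_1$ up to constants) diameter $O(1)$ sitting inside an affine subspace of dimension $q-1$. We have $M$ such points, one per output symbol, and additionally each carries a weight $P_Y(y)$ with $\sum_{y}P_Y(y)=1$. First I would discard the heavy symbols: at most $M/2$ of the $y$'s can have $P_Y(y)>2/M$, so at least $M/2$ symbols satisfy $P_Y(y)=O(1/M)$; restrict attention to this subcollection $\sY'$, which still has $|\sY'|\ge M/2$ symbols. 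This immediately gives \eqref{eq:Y} for whatever pair we end up selecting from $\sY'$.

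Next I would cover the simplex by small boxes. Tile $\Delta_{q-1}$ by cells of $\ell_1$-diameter at most $\delta$; since the simplex is $(q-1)$-dimensional of bounded size, $O(\delta^{-(q-1)})$ cells suffice. Choosing $\delta = c\,(1/M)^{1/(q-1)}$ with a suitable constant $c$ makes the number of cells strictly smaller than $|\sY'|\ge M/2$, so by pigeonhole two distinct symbols $y_1,y_2\in\sY'$ have their posterior vectors in the same cell, hence
\[
\|P_W(\cdot|y_1)-P_W(\cdot|y_2)\|_1 \le \delta = O\Big(\Big(\tfrac{1}{M}\Big)^{\frac{1}{q-1}}\Big),
\]
which is \eqref{eq:pm}. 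Both \eqref{eq:Y} and \eqref{eq:pm} now hold for this pair.

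Finally, plug these two bounds into Lemma \ref{first_lemma}: combining $P_Y(y_1)+P_Y(y_2)=O(1/M)$ with the $\ell_1$-bound $O((1/M)^{1/(q-1)})$ gives
\[
0\le I(W)-I(\tilde{W})\le \frac{P_Y(y_1)+P_Y(y_2)}{\ln 2}\,\|P_W(\cdot|y_1)-P_W(\cdot|y_2)\|_1 = O\Big(\Big(\tfrac{1}{M}\Big)^{1+\frac{1}{q-1}}\Big)=O\Big(\Big(\tfrac{1}{M}\Big)^{\frac{q}{q-1}}\Big),
\]
which is \eqref{eq:I}, using that the lower bound is the degradedness statement already in Lemma \ref{first_lemma}.

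The main obstacle is the counting step: one must be careful that the covering of the $(q-1)$-dimensional simplex really needs only $O(\delta^{-(q-1)})$ cells (not $O(\delta^{-q})$), i.e. that we genuinely save a dimension by using the affine hull of $\Delta_{q-1}$ rather than the ambient cube $[0,1]^q$; and that the implied constants in "number of cells $<|\sY'|$" can be absorbed into the $O(\cdot)$ in $\delta$. One should also handle the degenerate regime (e.g. $M$ small, or $M\le$ a constant depending on $q$) separately, where the asymptotic statement is vacuous or the bound holds trivially since $I(W)-I(\tilde W)=O(1)$.
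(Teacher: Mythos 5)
Your proposal is correct and follows essentially the same route as the paper: both restrict to the at least $M/2$ symbols with $P_Y(y)\le 2/M$ and then run a volume argument in the $(q-1)$-dimensional simplex before invoking Lemma \ref{first_lemma}. The only difference is cosmetic — you cover the simplex by $O(\delta^{-(q-1)})$ cells and pigeonhole, while the paper packs disjoint $\ell_1$-balls of radius $R/2$ around the posteriors and bounds their total volume by $\vol(S_q)$; these are dual forms of the same estimate and yield the same exponent.
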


\begin{proof} Consider the subset of output symbols $A_M(\sY)=\{y: P_Y(y) \leq 2/M\}.$
Noticing that $|(A_M(\sY))^c| \leq M/2$, we conclude that
\begin{equation}
|A_M(\sY)| \geq \frac{M}{2}. \label{no_of_balls}
\end{equation}
Keeping in mind the bound \eqref{merge_ineq}, let us estimate 
the maximum value  of the quantity
\begin{equation}
\min_{y_1,y_2\in A_M(\sY) } \|P_{W}(\cdot|y_1)-P_{W}(\cdot|y_2)\|_1. \label{one_norm}
\end{equation}
For each $y\in {\mathscr Y}$, the vector $P_{W}(.|y)$ is an element
of the probability simplex 
   $$
   S_q= \left\{ (s_1,\dots,s_q)\in {\mathbb R}^q \bigg| s_i\geq0, \sum_{i=1}^q s_i= 1 \right\}.
   $$
Let $r>0$ be a number less than the quantity in \eqref{one_norm}. 
Clearly, for any $y_1,y_2\in A_M(\sY))$ the $q$-dimensional $\ell_1$-balls $B_{r/2,q}(P_{W}(\cdot|y_i))$
of radius $r/2$ centered at $P_{W}(\cdot|y_i), i=1,2$ are disjoint, and therefore, so are
their intersections with $S_q.$ 
It is easily seen\footnote{Indeed, let $
   \tilde S_q= \{ (s_1,\dots,s_q)\in {\mathbb R}^q \big| s_i\geq0, \sum_{i=1}^q s_i\le 1 \},
   $
then $\text{Vol}(\tilde S_q)=$ \textcolor{black}{ $\frac 1q\times\text{height}\times\text{base}$, where the height $h$ is the distance from $0$
to the base $S_q.$} We obtain $\text{Vol}(\tilde S_q)=\frac 1q h \text{Vol}(S_q),$ where
$h=1/\sqrt q.$ Finally, $\text{Vol}(\tilde S_q)$ is easily found by induction
to be $1/q!.$} that $\text{Vol}(S_q)=\sqrt q/(q-1)!$.

Our idea will be to estimate $r$ from above by a volume-type argument. This will give an upper bound on the smallest $\ell_1$ distance
in \eqref{one_norm}. Below we shorten the notation by writing $B_1(y_i):=B_{r/2,q}(P_{W}(\cdot|y_i)).$ Let
   $$
   T_q(r):=\Big\{(s_1,\dots,s_q)\in {\mathbb R}^q \Big| s_i\ge -\frac r2, i=1,\dots,q\Big\}.
   $$
By definition of $r$, for every $y_i$ we have $B_1(y_i)\subset T_q(r)$ (i.e., introducing $T_q(r)$ removes
the need to deal with the effects of the corner points of $S_q$). Let 
$H_q=\big\{(s_1,\dots,s_q)\in {\mathbb R}^q \big| \sum_{i=1}^q s_i=1\big\}.$ Since for different $i$ the balls $B_1(y_i)$ are 
pairwise disjoint, we have the obvious inequality
  \begin{equation}\label{eq:M2}
    \frac M2\le \frac{\text{Vol}(T_q(r)\cap H_q)}{\text{Vol}(B_1(y_i)\cap H_q)}
  \end{equation}
where the denominator does not depend on the choice of $i=1,2,\dots, q$. 

Let us compute the numerator in \eqref{eq:M2}. We can view $T_q(r)\cap H_q$ as a new simplex obtained by moving the origin
to the point $A=(-\frac r2, \dots,-\frac r2)$ and then drawing the straight lines parallel to the coordinate axes (see 
{Figure \ref{fig:simplex}}). The distance from $A$ to $H_q$ equals the distance from $A$ to the point $(\frac 1q,\dots,\frac 1q)$ which is   $\frac r2 \sqrt q+\frac 1{\sqrt q},$ and the distance from $0$ to $H_q$ equals the height of $S_q$ and is $\frac1{\sqrt q}.$ Therefore,
    $$
    \text{Vol}(T_q(r)\cap H_q)=\biggl(\frac{\frac r2\sqrt q+\frac1{\sqrt q}}{\frac 1{\sqrt{q}}}\biggr)^{q-1} \text{Vol}(S_q)=
    \Big(\frac r2 q+1\Big)^{q-1}\frac{\sqrt q}{(q-1)!}.
    $$

\begin{figure}[H]
\begin{center}\includegraphics[width=3in,angle=270]{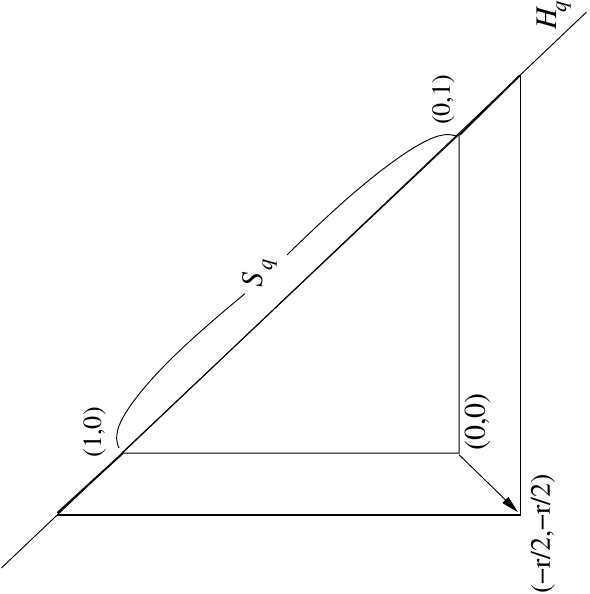}\end{center}
\caption{Expanding the simplex $S_q$ to compute a bound in \eqref{eq:M2}. This figure illustrates the case for $q=2.$}\label{fig:simplex}
\end{figure}

Now let us turn to the denominator in \eqref{eq:M2}. Note that $B_1(y_i)\supseteq B_2(y_i),$ where
$B_2(y_i)$ is an $\ell_2$ ball of radius $\frac r{2\sqrt q}$ centered at $P_{W}(\cdot|y_i).$ 
Further, $B_2(y_i)\cap H_q$ is an $\ell_2$ ball in $q-1$ dimensions, and its volume therefore is easily found:
  $$
  \text{Vol}(B_2(y_i)\cap H_q)=\frac{\pi^{\frac{q-1}2}}{{\Gamma}(\frac {q-1}2+1)}\Big(\frac {r}{2\sqrt{q}}\Big)^{q-1}.
  $$

We obtain
   $$
    \frac M2\le \frac{\text{Vol}(T_q(r)\cap H_q)}{\text{Vol}(B_1(y_i)\cap H_q)}\le 
    \frac{\text{Vol}(T_q(r)\cap H_q)}{\text{Vol}(B_2(y_i)\cap H_q)}=\frac{q^{\frac q2}}{(q-1)!}
    \frac{{{\Gamma}(\frac {q-1}2+1)}}{\pi^{\frac{q-1}2}} \Big(q+\frac 2r\Big)^{q-1}  $$
or, equivalently,
\begin{equation}\label{eq:2r}
   \frac 2r\ge   \Big(\frac{(q-1)!}{q^{\frac q2}}  \frac{\pi^{\frac{q-1}2}}{{\Gamma(\frac {q-1}2+1)}}
   \frac M2\Big)^{\frac 1{q-1}}-q\ge  \textcolor{black}{\Big(\frac M2\Big)^{\frac{1}{q-1}}-q}
\end{equation}
where the last inequality\footnote{\textcolor{black}{obtained by elementary calculations upon replacing $\Gamma(\cdot)$ with factorials.}} is valid for all $q\ge 2.$ From \eqref{eq:2r} we obtain
 \textcolor{black}{
   $$
   r\le \frac{2}{\big(\frac M2\big)^{\frac 1{q-1}}-q}.
   $$}
This proves  \eqref{eq:pm}, and \eqref{eq:I} follows immediately on applying Lemma \ref{first_lemma}. \end{proof}

This lemma leads to an important conclusion for the code construction: to degrade the subchannels
we should merge the symbols $y_1,y_2$ with small $P_Y(y_i)$ and such that the reverse channel
conditional PMFs $P_{W}(\cdot|y_i), i=1,2$ are $\ell_1$-close. Performing this step several
times in succession, we obtain the operation
called \texttt{degrade} in the description of Algorithm~\ref{euclid}.
The properties of this operation are stated in the following proposition.
\begin{proposition} Let $W$ be a DMC with input of size $q$.\\
(a) There exists
a function \texttt{degrade}$(W,\mu)$ such that its output channel $T$ satisfies
   \begin{equation}\label{eq:rateloss}
0\leq I(W)-I(T)\leq O\Big( \Big(\frac{1}{\mu}\Big)^{\frac 1{q-1}}\Big).
   \end{equation}
(b) For a given block length, let $W_N^{(i)}$ be the $i$-th subchannel after $n$ evolution steps of the polarization recursion.
Let $T_N^{(i)}$ denote the its approximation returned by Algorithm \ref{euclid}. 
Then
\begin{equation}\label{eq:vcrl}
0 \leq \frac{1}{N}\sum_{0\leq i \leq N} ( I( W_N^{(i)})- I( T_N^{(i)}) ) \leq n \, O\Big( \Big(\frac{1}{\mu}\Big)^{\frac 1{q-1}}\Big).
\end{equation}
\label{third_lemma}
\end{proposition}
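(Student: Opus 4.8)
The plan is to derive part (a) by iterating the single-merge estimate of Lemma~\ref{second_lemma} and summing a convergent series, and to derive part (b) by an induction on the number of recursion steps that uses conservation of symmetric capacity under the $\pm$ transforms together with the fact that degrading is preserved by them.

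For part (a), define $\texttt{degrade}(W,\mu)$ to be greedy mass merging: while the current output alphabet has size $m>\mu$, merge the pair of output symbols incurring the smallest capacity loss (cf.\ Lemma~\ref{first_lemma}), and repeat. Merging never changes the input size $q$ and lowers the output size by one, so by the greedy choice the loss at a step with current output size $m$ is at most that of the pair furnished by Lemma~\ref{second_lemma}, namely $c_q\,m^{-q/(q-1)}$ with $c_q$ depending only on $q$ (inspection of the proof of Lemma~\ref{second_lemma} shows the implied constant there involves only $\vol(S_q)$ and a ball-packing constant). Hence, if $|\text{out}(W)|=M$,
\[
0\le I(W)-I(T)\le\sum_{m=\mu+1}^{M}c_q\,m^{-q/(q-1)}\le c_q\int_{\mu}^{\infty}x^{-q/(q-1)}\,dx=c_q(q-1)\,\mu^{-1/(q-1)},
\]
the lower bound being immediate from $T\prec W$. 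The point that makes the bound depend on $\mu$ alone is that the exponent $q/(q-1)$ exceeds $1$, so the per-step losses are summable and dominated by their value at $m=\mu$; this is the step I expect to need the most care in writing rigorously.

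For part (b), I would first record two structural facts valid for any admissible combining rule $(u_1,u_2)=(f(x_1,x_2),x_2)$ as in \eqref{eq:ff} (so that, in particular, all subchannels keep input size $q$): (i) $I(V^-)+I(V^+)=2I(V)$ for every DMC $V$ with input size $q$, which is the chain rule applied to the bijection $(x_1,x_2)\mapsto(u_1,u_2)$ with uniform inputs; and (ii) if $T\prec V$ then $T^-\prec V^-$ and $T^+\prec V^+$, which follows by writing $T=P\circ V$ and checking directly that $T^-=(P\otimes P)\circ V^-$ and $T^+=(P\otimes P\otimes\mathrm{id})\circ V^+$. Fact (ii) shows that every $T_N^{(i)}$ output by Algorithm~\ref{euclid} satisfies $T_N^{(i)}\prec W_N^{(i)}$, so all terms in \eqref{eq:vcrl} are nonnegative. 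Set $\epsilon:=c_q(q-1)\,\mu^{-1/(q-1)}$ and let $A_k$ be the average of $I(\cdot)$ over the $2^k$ approximate channels present after $k$ evolution steps, with level $0$ meaning the single channel $\texttt{degrade}(W,\mu)$. By part (a), $A_0\ge I(W)-\epsilon$. By (i), applying $-$ and $+$ to the $2^{k-1}$ channels at level $k-1$ preserves their average, and the subsequent $\texttt{degrade}$ step costs at most $\epsilon$ per channel, so $A_k\ge A_{k-1}-\epsilon$; iterating gives $A_n\ge I(W)-(n+1)\epsilon$. Since the ideal recursion conserves capacity, $\frac1N\sum_i I(W_N^{(i)})=I(W)$, hence
\[
0\le\frac1N\sum_i\big(I(W_N^{(i)})-I(T_N^{(i)})\big)=I(W)-A_n\le(n+1)\epsilon=n\cdot O\big(\mu^{-1/(q-1)}\big),
\]
which is \eqref{eq:vcrl}.

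The two structural lemmas (i) and (ii) are the only places where specifics of the combining operation enter, and both are the familiar capacity-conservation and ``degradation commutes with polarization'' statements, going through unchanged for every $f$ of the form \eqref{eq:ff}; the main obstacle is really (a), namely arranging the iterated merges so the total loss telescopes to a quantity depending on $\mu$ but not on the (a priori doubly exponential) output size $M$. The rest is bookkeeping: the geometric-type summation in (a) and the additive accumulation of the $n+1$ degrading errors in (b).
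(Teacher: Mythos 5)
Your proof is correct and for part (a) it is essentially identical to the paper's: merge greedily, bound the loss at output size $m$ by Lemma~\ref{second_lemma} as $O(m^{-q/(q-1)})$, and sum the convergent series via an integral comparison to get $O(\mu^{-1/(q-1)})$ independent of $M$. For part (b) the paper merely asserts that the claim "follows immediately"; your accounting via $I(V^-)+I(V^+)=2I(V)$ and the preservation of degradation under the $\pm$ transforms is exactly the standard argument that makes this precise, so there is nothing to object to.
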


\begin{proof}
Let $M$ be the cardinality of the output alphabet of $W$. Performing 
$M-\mu$ merging steps of the output symbols in succession, 
we obtain a channel with an output alphabet of size $\mu$. If the pairs of symbols to be merged
are chosen based on Lemma \ref{second_lemma}, then \eqref{eq:I} implies that
\begin{align*}
0&\leq I(W)-I(T)\\ &\leq C(q) \sum_{i=\mu+1}^M \left(\frac{1}{i}\right)^{\frac q{q-1}}\\
&\le C(q) \int_{\mu}^M (x-1)^{-({\frac q{q-1}})}dx\\
&=O\Big( \Big(\frac{1}{\mu}\Big)^{\frac 1{q-1}}\Big)
\end{align*}
where $C(q)$
is the constant \textcolor{black}{(implied by \eqref{eq:I})}, that depends on the input alphabet size $q$ but not on the number $n$
of recursion steps. 
\textcolor{black}{
Take $\mu$ large enough to satisfy
   $$
   \left(\frac{\mu}{2}\right)^{\frac{1}{q-1}}\!\!\!-q\geq \frac{1}{2}\left(\frac{\mu}{2}\right)^{\frac{1}{q-1}}
   $$
then from \eqref{eq:I} we see that the rate loss can be bounded above by
$$\frac{8}{\ln 2}
\frac{2}{\mu\left(\frac{\mu}{2}\right)^{\frac{1}{q-1}}}=\frac{16}{\ln 2}\cdot2^{\frac 1{q-1}}
\left(\frac{1}{\mu}\right)^{\frac q{q-1}}
.$$
Hence, we can take $C(q)=(\frac{16 }{\ln 2})2^{\frac 1{q-1}}.$}
This proves \eqref{eq:rateloss}, and \eqref{eq:vcrl} follows immediately.
\end{proof}

\begin{remark} This result provides a generalization to the nonbinary case of a result in  \cite{pedarsani}
which analyzed the merging (degrading) algorithm of \cite{tal_vardy}. For the case of binary-input channels,
Lemma 1 of \cite{pedarsani} gave an estimate $O(1/\mu)$ of the approximation error.
Substituting $q=2$ in \eqref{eq:rateloss}, we note that this result is a generalization of 
\cite{pedarsani} to channels with arbitrary finite-size input. \textcolor{black}{Arguing as in \cite{pedarsani}, we can claim that
$\mu=O(n^{2(q-1)})$ suffices to ensure that the error of approximation for most subchannels, apart from a vanishing
proportion of them, decays to zero.}
\end{remark}

\begin{remark} Upper bounds similar to \eqref{eq:rateloss} are derived in \cite[Lemma 6]{tal_sharov_vardy} and \cite[Lemma 8]{pereg_tal}.
The output symbol merging policy in \cite{tal_sharov_vardy} makes it possible to have
$I(W)-I(\tilde{W})=O( (1/\mu )^{1/q})$. On the other hand, the channel
upgrading technique introduced in \cite{pereg_tal} gives the same bound as \eqref{eq:rateloss}.
It is interesting to observe that merging a pair of output symbols at each step 
as we do here is as good as the algorithms based on binning of output symbols.
\end{remark}

\vspace{-0.0in}
\begin{remark} 
A very recent result of \cite{tal} states that any construction procedure of polar codes
construction based on degrading after each polarization step, that guarantees the rate loss bounded as $I(W)-I(T) \leq \epsilon,$
necessarily has the output alphabet of size $\mu=\Omega((1/\epsilon)^{\frac{q-1}{2}}).$
Proposition \ref{third_lemma}
implies that the alphabet size of the algorithm that we propose scales as the square of this bound,
meaning that the proposed procedure is not too far from being optimal,
namely for any channel, our degradation scheme satisfies $\mu \leq (1/\epsilon)^{q-1}$,
and there exists a channel for which $\mu \geq (1/\sqrt{\epsilon})^{q-1}$ holds true even for
the optimal degradation scheme. 
\end{remark}
\vspace{-0.1in}
\textcolor{black}{
\begin{remark}
The experimental results in Sect.~\ref{sim} indicate that the upper bound on the
average loss of symmetric capacity given by \eqref{eq:vcrl} is not
tight, and so it is likely that the example of the channel that accounts for the lower bound in \cite{tal} is 
an exception rather than the norm. 
We observe that the rate losses do not accumulate linearly with respect to $n$
in practice, and there is no need to choose $\mu$ to be polynomially dependent on $n$
in order to ensure a bounded rate loss. In our experiments we never take $\mu$ larger than 
$256$ and in many cases much smaller than that.
\label{rate_loss}
\end{remark}
}

\remove{
In the previous section we established the principles of the code construction. 
The idea was to approximate each subchannel by another channel having
output alphabet size $\mu$ after each evolution step. We have also proposed
a method to carry out such an approximation in Lemma \ref{second_lemma}
and Proposition \ref{third_lemma}.
In this section we present a specific procedure that can be used to construct a polar code of a given length adopted to a channel
$W.$ This is the procedure we implemented in our experiments in Sect.~\ref{sim}.
We note that the procedure described below is not the only way to utilize our results: we could in principle
implement the greedy mass merging (i.e., merging pairs of symbols based on their symmetric capacities) rather than 
on the distance between the probability distributions.

Let us define more explicitly the \texttt{degrading\textunderscore merge} function described in Proposition \ref{third_lemma}.
We attempt to find a pair of output symbols $y_1$ and $y_2$ such that
\begin{align*}
& P_Y(y_1) \leq \frac{C_1}{M}, \quad P_Y(y_2) \leq \frac{C_1}{M} \\
||P_{X|Y}&(.|y_1)-P_{X|Y}(.|y_2)||_1 \leq C_2 \left( \frac{1}{M} \right)^{1/(q-1)},
\end{align*}
where $C_1,C_2$ are some constants. Once such a pair of symbols is found, we merge them into one
symbol of the output of the new degraded channel, and update the $P_Y$ and $P_{X|Y}$ matrices accordingly. 
This step is iterated as many times as needed until the cardinality of the output alphabet falls below $\mu$.
The described procedure is summarized in Algorithm \ref{euclid2} below. The notation in the algorithm is mostly
self-explanatory; let us just note that $\texttt{Calc\textunderscore PY}(T)$ refers to computing the marginal
PMF of the channel output and $\texttt{Calc\textunderscore X|Y}(T)$ is the same for the conditional PMF of the
reverse channel.

\small\normalsize
\begin{algorithm}[h]
\caption{The \texttt{degrading\textunderscore merge} function  }\label{euclid2}
\hspace{-1.9in}\textbf{input:} DMC $W:\sX\to \sY, |\sX|=q,|\sY|=M;$ constants $C_1$ and $C_2;$  desired 
output size $\mu$

\hspace{-3.6in}\textbf{output:} Degraded channel $Q:{\mathscr X}\to{\mathscr Y}'$, where
$|{\mathscr Y}'|\leq \mu$.

\begin{algorithmic}
\State $Q\gets W$
\State $py\gets \texttt{Calc\textunderscore PY}(Q)$
\State $x\textunderscore given\textunderscore y\gets \texttt{Calc\textunderscore X|Y}(Q)$
\State $\ell \gets M$
\For{$k=1,2,\dots,M$} 
\If{ $py[k]\leq C_1/\mu$ }
\For{$j=k+1,k+2,\dots,M$}
\If{ $py[j]\leq C_1/\ell\,\textbf{and}\, ||x\textunderscore given\textunderscore y[j]-x\textunderscore given\textunderscore y[k]||_1\leq C_2 
(\frac{1}{\mu})^{1/(q-1)}$}
\State $Q\gets \texttt{merge\textunderscore two\textunderscore symbols}(Q,j,k)$
\State $py\gets \texttt{update\textunderscore py}(py,j,k)$
\State $x\textunderscore given\textunderscore y \gets \texttt{update\textunderscore x\textunderscore given\textunderscore y}
(x\textunderscore given\textunderscore y,j,k)$
\State $\ell\gets \ell-1$
\State \textbf{break}
\EndIf
\EndFor
\EndIf
\If{$ \ell\leq\mu$}
\State \textbf{break}
\EndIf
\EndFor
\State \textbf{return} $Q$
\end{algorithmic}
\end{algorithm}

\begin{proposition}{\sc (Complexity estimate)} The running time of the proposed implementation of the code construction
algorithm is at most $O(N \mu^6)$.
\end{proposition}
\begin{proof}
Lemma \ref{second_lemma} guarantees that there exists a constant $C_2$
for which running $M-\mu$ rounds of the degrading-merge function 
produces a degraded channel with an output alphabet of size $\mu$. 
Since each iteration has complexity $O(M^2)$, the whole
degrading operation can be performed in $O(M^3)$ steps.
Moreover, Algorithm~\ref{euclid} implies that the maximum value that $M$ can take during the polar code construction procedure
is $q {\mu}^2$. Hence, we see that \texttt{degrading\textunderscore merge}
takes $O(\mu^6)$ time in Algorithm \ref{euclid}, meaning that
the total running time for the computation of a single subchannel $W_i$ is $O(n \mu^6)$.
Getting to the block length $N=2^n$ involves computing 
$2^1+2^2+\dots+ 2^{n}= 2N-2$ subchannels in the nodes of the splitting-combining tree, and 
so the overall running time of our polar code construction is $O(N \mu^6),$ as claimed.
\end{proof}

In our experiments we run \texttt{degrading\textunderscore merge} a constant
number of times which is independent of $\mu$. Hence, what we have in
practice is that \texttt{degrading\textunderscore merge} takes $O(\mu^4)$
time, resulting in an overall complexity of $O(N\mu^4)$ as opposed to
$O(N\mu^6)$.
}
\remove{
\begin{remark}\label{remark:optimal} It may be possible to improve the outcome of the algorithm
(in terms of the rate loss or the error probability of the constructed code)
by merging the output symbols $y_1$ and $y_2$ such that the gap $I(W)-I(\tilde{W})$ is minimized
among all the pairs. In this case, \texttt{degrading\textunderscore merge} would take $O(M^3)$ time,
increasing the complexity estimate to $O(N \mu^6).$ 
Although this modification is immediate, experiments suggest that the running time of such an optimized algorithm is much higher 
than the currently implemented blind choice. This improvement may be used if we are faced with the task of constructing codes for 
an actual application where the signal-to-noise ratio or gap to capacity should be optimized to the highest possible extent.
\end{remark}
}

\section{No-Loss Alphabet Reduction}
\label{cyclic}
Throughout this section we will use the transformation \textcolor{black}{\eqref{eq:+q}--\eqref{eq:-q}}, in which the ``$+$" is addition modulo $q.$ We discuss a way to further reduce the complexity of the code construction algorithm using 
the additive structure on $\sX.$  As shown in \eqref{merge_ineq}, the symmetric capacity loss is small if the posterior distributions 
induced by the merged symbols are $\ell_1$-close. Here we argue that if these vectors are related through  cyclic shifts, the output symbols can be merged \textcolor{black}{(using the merging operation defined later in equation \eqref{eq:ap}, which is different from the merging operation defined in Lemma \ref{lemma:defmerge})} at no cost to code performance.

Consider the construction of $q$-ary polar codes for channels with input alphabet $q\ge 2.$ Since $I(W)=\log q-H(X|Y),$ to construct polar codes it suffices to track
the values of $H(X|Y)$ for the transformed channels. Keeping in mind that $H(X|Y)=E(-\log P_{X|Y}(X|Y)),$ let us write 
the polarizing transformation in terms of the reverse channel 
$P_{X|Y}:$

\textcolor{black}{
 \begin{equation}\label{minustrans}
\left.\begin{aligned}
P_{Y^-}^{-}(y_i,y_j)&=P_Y(y_i)P_Y(y_j),\\
P_{X|Y^-}^{-}(x|y_i,y_j)&=\sum_{u_2\in\sX}P_{X|Y}(x\oplus u_2|y_i)P_{X|Y}(u_2|y_j)\\
P^-_X(x)&= \sum_{y_i,y_j\in\sY}P_{X|Y^-}^{-}(x|y_i,y_j)P_{Y^-}^{-}(y_i,y_j)\\
P_{Y^+}^{+ }(u,y_i,y_j)&=\left(\sum_{x\in\sX}P_{X|Y}(u\oplus x|y_i)P_{X|Y}(x|y_j)\right) P_Y(y_i) P_Y(y_j),\\
P_{X|Y^+}^{+ }(x|u,y_i,y_j)&=\frac{P_{X|Y}(u\oplus x|y_i)P_{X|Y}(x|y_j)}{\sum_{x_0\in\sX}P_{X|Y}(u\oplus x_0|y_i)P_{X|Y}(x_0|y_j)}\\
P^+_X(x)&= \sum_{u\in\sX,y_i,y_j\in\sY} P_{X|Y^+}^{+ }(x|u,y_i,y_j) P_{Y^+}^{+ }(u,y_i,y_j)
\end{aligned}\right\}
\end{equation}
}

If $P_X$ is uniform, both $P^+_X$ and $P^-_X$ are also uniform. \textcolor{black}{For this reason, 
 the posterior distributions $P_{X|Y^-}^{-}$ and $P_{X|Y^+}^{+ }$ defined in \eqref{minustrans}
are equal to the posterior distributions induced by the channels $W^-$ and $W^+$ defined in \textcolor{black}{\eqref{eq:+q}--\eqref{eq:-q}} respectively, under the uniform prior distributions.}
Throughout this section we will calculate the transformation of probability distributions using \eqref{minustrans} instead of 
\textcolor{black}{\eqref{eq:+q}--\eqref{eq:-q}} since we rely on the posterior distributions to merge symbols. 

\begin{definition} Given a distribution $P_{XY}$ on $\sX\times \sY,$ define an {\em equivalence relation} on $\sY$ as follows: $y_1\overset{P}{\sim} y_2$ if there exists $x_1\in\sX$ such that $P_{X|Y}(x\oplus x_1|y_1)=P_{X|Y}(x|y_2)$ for every $x\in\sX$. 
This defines a partition of $\sY$ into a set of equivalence classes $\cY=\{A_1,A_2,\dots,A_{|\cY|}\}.$
\end{definition}

We show that if $y_1\overset{P}{\sim} y_2,$ then we can \textcolor{black}{losslessly unify}  $y_1$ and $y_2$ into one alphabet symbol without 
changing $H(X|Y)$ for all $P_{XY}^s$, $s\in \{-,+\}^n$  and all $n\ge 1$. As a consequence, it is possible to assign one symbol to each equivalence class, i.e., the effective output alphabet of $W$
for the purposes of code construction is formed by the set $\cY$.

To formalize this intuition, we need the following definitions.

\vspace*{.05in}\begin{definition}\label{quid}
Consider a pair of distributions $P_{XY_1}, Q_{XY_2}.$  We say that two subsets of output alphabets $A\subseteq \sY_1, B\subseteq \sY_2$ are in correspondence, denoted $A\simeq B$, if 

(1) $P_{Y_1}(A)=P_{Y_2}(B)$;

(2) For every $y_1\in A$ and $y_2\in B$,  \textcolor{black}{there exists $x_1\in\sX,$ possibly depending on $y_1$ and $y_2$, such that for all 
$x\in\sX$, $P_{X|Y_1}(x\oplus x_1|y_1)=Q_{X|Y_2}(x|y_2)$.} 
\end{definition}
\vspace*{.05in}
Note that condition (2) in this definition implies that all the elements in $A$ are in the same equivalence class, 
and all the elements in $B$ are also in the same equivalence class.

\vspace*{.05in}
\begin{definition} \label{def:eq} We call the distributions  $P_{XY_1}, Q_{XY_2}$ {\em equivalent}, 
denoted $P_{XY_1}\equiv Q_{XY_2}$, if there is a bijection $\phi:\cY_1\to \cY_2$ such 
that $A\simeq \phi(A)$ for every equivalence class $A\in \cY_1.$ 
\end{definition}
\vspace*{.05in}

Note that two equivalent distributions have the same $H(X|Y).$ 

The following proposition underlies the proposed speedup of the polar code construction.
Its proof is computational in nature and is given in the Appendix.
\begin{proposition}\label{thm:simeq} Let $P_{XY_1}, Q_{XY_2}$ be two distributions. If $P_{XY_1}\equiv Q_{XY_2}$ then 
 for all $s\in\{-,+\}^n, n\ge 1$ we have $P^s_{XY_1^s}\equiv Q^s_{XY_2^s}$ (and therefore $H_{P^s}(X|Y_1)=H_{Q^s}(X|Y_2)$).
\end{proposition}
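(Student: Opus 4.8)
The plan is to prove the statement by induction on $n$, the number of polarization steps. For $n=1$ it suffices to treat a single application of the $-$ transform and a single application of the $+$ transform; the general case follows by composing these and using transitivity of $\equiv$. So the heart of the matter is the base case: given $P_{XY_1}\equiv Q_{XY_2}$, show $P^-_{XY_1^-}\equiv Q^-_{XY_2^-}$ and $P^+_{XY_1^+}\equiv Q^+_{XY_2^+}$, with the transformed distributions computed via the formulas \eqref{minustrans}.

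First I would unwind the definition of equivalence: fix the bijection $\phi:\cY_1\to\cY_2$ witnessing $P_{XY_1}\equiv Q_{XY_2}$, so that for each equivalence class $A\in\cY_1$ we have $A\simeq\phi(A)$, meaning $P_{Y_1}(A)=P_{Y_2}(\phi(A))$ and the posterior vectors at points of $A$ and $\phi(A)$ agree up to a cyclic shift of $\sX$. The output alphabet of $Y_1^-$ is $\sY_1\times\sY_1$, and I would propose the induced bijection on the level of equivalence classes to be $\phi\times\phi$ acting on (classes of) pairs $(y_i,y_j)$. The key computation is then: if $y_i\in A$ shifts to $\phi(y_i)$ by $x_1$ and $y_j\in B$ shifts to $\phi(y_j)$ by $x_2$ (in the sense of Definition~\ref{quid}(2)), I claim the pair $(y_i,y_j)$ and $(\phi(y_i),\phi(y_j))$ are related by the single shift $x_1+x_2$ under the $-$ transform, and by a shift determined by $x_1,x_2$ together with a relabeling of the ``$u$'' coordinate under the $+$ transform. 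Concretely, for the $-$ channel one substitutes into $P^-_{X|Y^-}(x|y_i,y_j)=\sum_{u_2}P_{X|Y}(x\oplus u_2|y_i)P_{X|Y}(u_2|y_j)$, replaces each posterior by the shifted posterior at $\phi(y_i),\phi(y_j)$, and reindexes the summation variable $u_2$ to absorb $x_2$ and the argument $x$ to absorb $x_1$; this is a routine change of variables in the cyclic group $\sX$. The marginal condition $P_{Y^-}(y_i,y_j)=P_Y(y_i)P_Y(y_j)$ makes condition (1) follow immediately by multiplicativity from $P_{Y_1}(A)=P_{Y_2}(\phi(A))$ summed appropriately over pairs of classes.

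For the $+$ transform the bookkeeping is slightly heavier because the output alphabet is $\sX\times\sY_1\times\sY_1$ and the posterior $P^+_{X|Y^+}(x|u,y_i,y_j)$ depends on the extra coordinate $u$ through $P_{X|Y}(u\oplus x|y_i)P_{X|Y}(x|y_j)$ normalized. Here I would show that the triple $(u,y_i,y_j)$ is equivalent, as an output symbol, to $(u',\phi(y_i),\phi(y_j))$ for an appropriate $u'$ depending on $u,x_1,x_2$, and that the cyclic shift relating the two posteriors is again built from $x_1,x_2$; the normalizing denominators match because they are sums of the same shifted products. One subtlety worth flagging: the equivalence classes of the product alphabet need not be exactly the products of equivalence classes of $\sY_1$ — there can be extra coincidences — but this only makes the partition coarser, and it suffices to exhibit \emph{one} bijection between (possibly finer) labelings respecting $\simeq$; refining to the actual classes $\cY_{1}^-$, $\cY_{2}^-$ afterward is harmless since $\equiv$ only requires a compatible bijection on classes. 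The main obstacle, then, is purely notational: carefully choosing the witness shifts for the product symbols and verifying the change-of-variables identities in $\sX$ without sign or indexing errors, particularly keeping straight which coordinate ($x$, $u_2$, or $u$) absorbs which shift. Once the base case $n=1$ is in hand, the inductive step is immediate: apply the base case to $P^{s_1\cdots s_{n-1}}_{XY_1^{s_1\cdots s_{n-1}}}\equiv Q^{s_1\cdots s_{n-1}}_{XY_2^{s_1\cdots s_{n-1}}}$ with the single step $s_n$, and the final clause $H_{P^s}(X|Y_1)=H_{Q^s}(X|Y_2)$ follows from the remark that equivalent distributions share the same conditional entropy, since $H(X|Y)$ is invariant under cyclic relabeling of $\sX$ within each output symbol and under merging output symbols with cyclically-shifted posteriors.
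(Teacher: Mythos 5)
Your proposal follows essentially the same route as the paper's appendix proof: induction on $n$ reduces to the single-step $-$ and $+$ cases, where the product bijection $\phi\times\phi$ (augmented by a relabeling of the $u$-coordinate for the $+$ transform) is verified through a change of variables in the cyclic group, exactly as the paper does, and your remark that the true equivalence classes of the product alphabet may be coarser is a valid and harmless refinement. The only quibble is that the composite shift in the $-$ case works out to $x_1\oplus(-x_2)$ rather than $x_1\oplus x_2$, but since Definition~\ref{quid} only requires the existence of \emph{some} shift, this does not affect the argument.
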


\begin{remark}
\textcolor{black}{Proposition \ref{thm:simeq} shows that the equivalence relation between two output symbols $y_1$
and $y_2$ is preserved after channel evolution steps, meaning that the symmetric capacity of all the subchannels
remains unchanged once $y_1$ and $y_2$ are unified.
Therefore,  the lossless unification introduced
in this section and the lossy merging described in Section \ref{main} can be used together and this does
not cause any complications in the sense that the analysis carried out in Section \ref{main} remains to be valid.}
\end{remark}

The next proposition provides a systematic way to \textcolor{black}{unify} output symbols of the synthesized channels obtained by the `$+$'
 transformation \textcolor{black}{in a lossless way.}

\begin{proposition}\label{novak}
Let distribution $P_{XY}$ on $\sX\times\sY,$ and let
$P^-_{XY^-}$ and $P^+_{XY^+}$ be defined as in \eqref{minustrans}.
For every $(v,y_1,y_2)\in\sX\times\sY^2$ we have 
\begin{equation}\label{plusone}
(v,y_1,y_2)\overset{P^+ }{\sim}(-v,y_2,y_1),
\end{equation}
where if $y_1=y_2$ then $v\ne 0.$
\end{proposition}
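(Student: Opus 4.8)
The goal is to show that merging the output symbol $(v,y_1,y_2)$ of the `$+$' channel $P^+_{XY^+}$ with $(-v,y_2,y_1)$ causes no loss, i.e., that these two symbols lie in the same equivalence class under $\overset{P^+}{\sim}$. By the definition of the relation, I must exhibit, for each $x\in\sX$, an element $x_1\in\sX$ such that
\[
P^+_{X|Y^+}(x\oplus x_1\mid v,y_1,y_2)=P^+_{X|Y^+}(x\mid -v,y_2,y_1).
\]
The plan is to write both sides explicitly from the last line of \eqref{minustrans} and then find the right shift $x_1$ by a change of summation variable.

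First I would substitute directly: the right-hand side is
\[
P^+_{X|Y^+}(x\mid -v,y_2,y_1)=\frac{P_{X|Y}(-v\oplus x\mid y_2)\,P_{X|Y}(x\mid y_1)}{\sum_{x_0\in\sX}P_{X|Y}(-v\oplus x_0\mid y_2)\,P_{X|Y}(x_0\mid y_1)}.
\]
Reindexing the denominator sum via $x_0\mapsto v\oplus x_0$ (a bijection on $\sX$ since `$+$' is a group operation) shows the denominator equals the denominator of $P^+_{X|Y^+}(\cdot\mid v,y_1,y_2)$. For the numerator, I want to match $P_{X|Y}(-v\oplus x\mid y_2)\,P_{X|Y}(x\mid y_1)$ with $P_{X|Y}(v\oplus(x\oplus x_1)\mid y_1)\,P_{X|Y}(x\oplus x_1\mid y_2)$; choosing $x_1$ so that $x\oplus x_1 = -v\oplus x$, i.e. $x_1=-v$, makes the second factor agree with the first factor of the target numerator, and then $v\oplus(x\oplus x_1)=v\oplus(-v\oplus x)=x$ makes the first factor agree with the second. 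Hence $x_1=-v$ (independent of $x$) works, and the two conditional PMFs, being equal term by term after the shift, coincide as required.

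The remaining point is the side condition: when $y_1=y_2$ we need $v\ne 0$, because the relation $\overset{P^+}{\sim}$ (like any equivalence relation obtained by merging) should identify two \emph{distinct} output symbols, and $(v,y_1,y_1)=(-v,y_1,y_1)$ precisely when $v=-v$, i.e. $2v\equiv 0$; when $q$ is odd this forces $v=0$, and for even $q$ one still excludes $v=0$ to guarantee the merged pair is nontrivial. I would simply note that for $v\ne 0$ or $y_1\ne y_2$ the two triples $(v,y_1,y_2)$ and $(-v,y_2,y_1)$ are genuinely different symbols of the `$+$' channel's output alphabet, so the computation above exhibits a nontrivial merge.

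The only mild subtlety — not really an obstacle — is bookkeeping with the group operation: one must consistently use that $x\mapsto a\oplus x$ and $x\mapsto -x$ are bijections of $\sX$ (so sums are invariant under these reindexings) and that the additive inverse $-v$ is well defined; since $\sX=\mathbb{Z}_q$ under modulo-$q$ addition throughout this section, this is immediate. No appeal to Proposition \ref{thm:simeq} is needed here — that proposition is what lets us \emph{propagate} such merges through further polarization steps, whereas the present statement is the base identity that $(v,y_1,y_2)$ and $(-v,y_2,y_1)$ already induce identical posteriors up to a shift.
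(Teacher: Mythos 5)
Your proof is correct and follows essentially the same route as the paper: a direct substitution into the last line of \eqref{minustrans} together with the reindexing $x_0\mapsto v\oplus x_0$, yielding the uniform shift $x_1=-v$ so that $P^+_{X|Y^+}(x\oplus(-v)\mid v,y_1,y_2)=P^+_{X|Y^+}(x\mid -v,y_2,y_1)$, which is exactly the identity the paper derives (there written as $P^+_{X|Y^+}(u\mid u_1,y_1,y_2)=P^+_{X|Y^+}(u\oplus u_1\mid -u_1,y_2,y_1)$). Your added remarks on the side condition and on not needing Proposition \ref{thm:simeq} are consistent with the paper's treatment.
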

\begin{proof}
For every $y_1,y_2\in\sY$ and any $u_1,u\in\sX,$ we have
\begin{align*}
P^+_{X|Y^+} (u|(u_1,y_1,y_2))&=\frac{P_{X|Y}(u_1\oplus u|y_1)P_{X|Y}(u|y_2)}{\sum_{x_0\in\sX}P_{X|Y}(u_1\oplus x_0|y_1)P_{X|Y}(x_0|y_2)}\\
&=\frac{P(-u_1\oplus(u\oplus u_1)|y_2)P(u_1\oplus u|y_1)}{\sum_{x_0\in\sX}P_{X|Y}(-u_1\oplus (u_1\oplus x_0)|y_2)P_{X|Y}(u_1\oplus x_0|y_1)}\\
&=P^+_{X|Y^+} (u\oplus u_1|(-u_1,y_2,y_1)).
\end{align*}
This proves \eqref{plusone}.
\end{proof}

\underline{\sl No-loss cyclic \textcolor{black}{unification} algorithm}

\vspace*{.1in} Using the above considerations, we can reduce the time needed to construct a polar code. The informal
description of the algorithm is as follows.
Given a DMC $W:\sX\to\sY,$ we calculate a joint distribution $P_{XY}$ on $\sX\times \sY$ by assuming a uniform prior on $\sX.$ 
We then use \eqref{minustrans} to recursively calculate $P^s_{XY^s},$ and after each step of the recursion
we reduce the output alphabet size by assigning one symbol to the whole equivalence class. Namely,
for each equivalence class $A$ in the output alphabet $\sY^s,$ we set $P^s_{Y^s}(A)=\sum_{y\in A}P^s_{Y^s}(y)$ and $P^s_{X|Y^s}(x|A)=P^s_{X|Y^s}(x|y^*)$ for an arbitrarily chosen $y^*\in A.$ 
Note that $y^*$ can be chosen arbitrarily because the vectors $P^s_{X|Y^s}(\cdot|y), y\in A$ are cyclic shifts of each other.
By Prop. \ref{thm:simeq}, we have $I(W^s)=\log q-H_{P^s}(X|Y),$ i.e., 
the alphabet reduction entails no approximation of the capacity values.
 
Let us give an example, which shows that this simple proposal can result in a significant reduction of the size of the output alphabet.  Let $W$ be a {\em $q$-ary symmetric channel} ($q$SC) $W:\sX\to\sY,$ $|\sX|=|\sY|=q$
   \begin{equation}\label{eq:qsc}
W(y|x)=(1-\epsilon)\delta_{x,y}+\frac{\epsilon}{q-1}(1-\delta_{x,y}), 
  \end{equation}
\textcolor{black}{where $\delta_{x,y}$ is the Kronecker delta function},  
and let us take $q=4.$ Consider the channels $W^s,s\in\{+,-\}^n$ obtained by several applications of the recursion \textcolor{black}{\eqref{eq:+}--\eqref{eq:-}}. The actual output alphabet size of the channels $W^+,W^{++}$ and $W^{+++}$ is $4^3,4^7,$ and $4^{15},$ respectively. 
At the same time, the effective output alphabet size of $W^+,W^{++}$ and $W^{+++}$ obtained upon \textcolor{black}{unifying} the equivalence classes in $\sY$
is no more than $3,24,$ and $1200$ (the numbers come from experiment). In particular, the effective output alphabet size of $W^{+++}$ is less than a $10^{6}$-th fraction of its actual output alphabet size. Let $n\ge 3$ and $s\in\{+,-\}^n.$ If $s$ starts with $+++,$ then the effective output alphabet size of $W^s$ is less than a $(10^{6\times 2^{n-3}})$-th fraction of its actual alphabet size. 

\vspace*{.1in}
\underline{\sl Improved greedy mass merging algorithm}
\vspace*{.1in}

Now we are ready to describe the improved code construction scheme. 
Prop. \ref{thm:simeq} implies that if the vectors $P_{X|Y}(\cdot|y_i),i=1,2$ are cyclic shifts of each other,
\textcolor{black}{unifying them as} one symbol $\tilde y$ incurs no rate loss. Extending this intuition, we assume that performing
greedy mass merging using all the cyclic shifts of these vectors improves the accuracy of the approximation. 

Given a DMC $W:\sX\to\sY,$ we calculate a joint distribution $P_{XY}$ on $\sX\times \sY$ by assuming the uniform prior on $\sX$ and taking $W$ as the conditional probability. We then use \eqref{minustrans} to recursively calculate $P^s_{XY^s}$ and after each step of transformation:

(1) If the last step in $s$ is $+$: First use the \texttt{merge\textunderscore pair} function below to merge the symbols $(u_1,y_1,y_2)$ and $(-u_1,y_2,y_1)$ for all $u_1,y_1,y_2,$ then use the \texttt{degrade} function below on $P^s_{XY^s}.$

(2) If the last step in $s$ is $-$, use the \texttt{degrade} function below on $P^s_{XY^s}.$

The function \texttt{merge\textunderscore pair}$(Q,(y_1,y_2,u))$ is defined as follows:
Form the alphabet $\tilde \sY=\sY\backslash\{y_1,y_2\}\cup\{\tilde y\},$ 
putting $Q_{\tilde Y}(y)=Q_Y(y),Q_{X|\tilde Y}(x|y)=Q_{X|Y}(x|y)$ for all $x\in\sX$ and $y\in\tilde{\sY}\backslash \{\tilde y\}$ and
  \begin{equation}\label{eq:ap}
\begin{aligned}
Q_{\tilde Y}(\tilde y)&=Q_Y(y_1)+Q_Y(y_2),\\
  Q_{X|\tilde Y}(x|\tilde y)&=\frac{Q_Y(y_1)Q_{X|Y}(x|y_1)+Q_Y(y_2)Q_{X|Y}(x\oplus u|y_2)}{Q_{\tilde Y}(\tilde y)}.
\end{aligned}
  \end{equation}

\begin{remark}
Due to the concavity of the entropy function \cite[Thm. 2.7.3]{CoverThomas}, $H(X|Y)$ can only increase after calling the \texttt{merge\textunderscore pair} function.
\end{remark}

\begin{algorithm}
\caption{The \texttt{degrade} function  }\label{alg2}
\textbf{input:} distribution $P_{X,Y_0}$ over $\sX\times\sY_0,$ the target output alphabet size $\mu.$

\textbf{output:} distribution $Q_{X,Y}$ over $\sX\times\sY,$ where
$|\sY|\leq \mu.$
\begin{algorithmic}
\State $Q\gets P$
\State $\ell\gets |\sY|$
\While{$\ell>\mu$}
\State $(y_1,y_2,u)\gets \texttt{choose}(Q)$
\State $Q\gets \texttt{merge\textunderscore pair}(Q,(y_1,y_2,u))$
\State $\ell\gets \ell-1$
\EndWhile
\State \textbf{return} $Q$
\end{algorithmic}
\end{algorithm}

The function
\texttt{choose}$(Q)$ is defined as follows. 
Find the triple $y_1,y_2$ and $u\in{\mathscr X}$ such that
the change of conditional entropy $H_Q(X|Y)$ incurred by the merge $(y_1,y_2)\to \tilde y$ using \texttt{merge\textunderscore pair}$(Q,(y_1,y_2,u))$ 
    $$
   \Delta(H)\triangleq Q_{\tilde Y}(\tilde y)H(X|\tilde Y=\tilde y)-\sum_{i=1}^2 Q_Y(y_i)H(X|Y=y_i)
   $$
is the smallest among all the triples $(y_i,y_j,u)\in\sY^2\times\sX$. 

\begin{remark}{
The main difference between Algorithm~\ref{alg2} and the ordinary greedy mass merging algorithm discussed in Sect.~\ref{main} (e.g., 
Algorithm C in \cite{tal_vardy}) can be described as follows. In order to select a pair of symbols 
that induces the smallest increase of $H(X|Y),$ Algorithm~\ref{alg2} considers all the cyclic shifts of the posterior distributions of 
pairs of symbols, while the ``ordinary'' greedy mass merging algorithm examines only the distributions themselves. As argued above, this is the reason that Algorithm \ref{alg2} leads to a smaller rate loss than Algorithm \ref{euclid}.}

{Note that to perform the `+' transformation, we first use \eqref{plusone} to merge pairs of symbols with cyclically
shifted posterior vectors and then switch to greedy mass merging. In doing so, we incur a smaller rate loss because the number of
steps of approximation performed for Algorithm~\ref{alg2} is only half the number of steps performed in Algorithm \ref{euclid}.
Moreover, since \eqref{plusone} provides a systematic way of merging symbols with cyclically shifted distributions, (in other words, we do not need to search all the pairs in order to find them,) the running time of Algorithm \ref{alg2} is also reduced from that of Algorithm \ref{euclid}.
This intuition is confirmed in our experiments which show that the overall gap to capacity of the constructed codes is smaller than the one attained by using the basic greedy mass merging, while the time taken by the algorithm is reduced from greedy mass merging alone. More details about the experiments are given in Sect.~\ref{sim}. }
\end{remark}

\vspace*{.1in}
\underline{\sl The finite field transformation of Mori and Tanaka \cite{mori_tanaka}}
\vspace*{.1in}

We also note that Prop. \ref{thm:simeq} remains valid when the input alphabet of the channel is a finite field $\ff_q$ and Ar{\i}kan's transform $F=\text{\small{$\Big(\hspace*{-.05in}\begin{array}{c@{\hspace*{0.05in}}c}1&0\\[-.05in]1&1\end{array}\hspace*{-.05in}\Big)$}}$ is replaced by a transform based on the field structure, e.g., given by \eqref{eq:mt}.
This fact is stated in the following proposition whose proof is similar to Prop. \ref{thm:simeq} and will be omitted.
\begin{proposition}
Let $\sX=\ff_q$ and let $P_{XY_1}, Q_{XY_2}$ be two distributions. Suppose that the polarizing transform used is a finite-field 
type transform given by \eqref{eq:mt}. If $P_{XY_1}\equiv Q_{XY_2}$ then 
 for all $s\in\{-,+\}^n, n\ge 1$ we have $P^s_{XY_1^s}\equiv Q^s_{XY_2^s}$ (and therefore $H_{P^s}(X|Y_1)=H_{Q^s}(X|Y_2)$).
\end{proposition}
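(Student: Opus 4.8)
The plan is to mirror the proof of Proposition~\ref{thm:simeq}, since the claim is identical except that $\sX=\ff_q$ and the combining operation uses the Mori--Tanaka transform $G_\gamma$ of \eqref{eq:mt} in place of Ar{\i}kan's $F$. First I would write out the analogue of the recursion \eqref{minustrans} for the transform $G_\gamma$: here the synthesized channels combine a pair of output symbols $y_i,y_j$ with the linear relations $u_1 = \gamma x_1 + x_2$, $u_2 = x_2$ (so $x_1 = \gamma^{-1}(u_1 - u_2)$, $x_2 = u_2$), giving
\begin{align*}
P^{-}_{X|Y^{-}}(x\mid y_i,y_j) &= \sum_{u_2\in\ff_q} P_{X|Y}(\gamma^{-1}(x - u_2)\mid y_i)\,P_{X|Y}(u_2\mid y_j),\\
P^{+}_{X|Y^{+}}(x\mid u_1,y_i,y_j) &= \frac{P_{X|Y}(\gamma^{-1}(u_1 - x)\mid y_i)\,P_{X|Y}(x\mid y_j)}{\sum_{x_0} P_{X|Y}(\gamma^{-1}(u_1 - x_0)\mid y_i)\,P_{X|Y}(x_0\mid y_j)},
\end{align*}
together with the obvious products for the $P_{Y^{\pm}}$ marginals. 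Since $\gamma\ne 0$, the map $x\mapsto\gamma^{-1}x$ is a bijection of $\ff_q$, and more importantly it is an automorphism of the additive group $(\ff_q,+)$; this is the only structural fact about $G_\gamma$ the argument needs.

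Next I would verify the base case and the inductive step exactly as in the Appendix proof of Prop.~\ref{thm:simeq}. Assume $P_{XY_1}\equiv Q_{XY_2}$, i.e.\ there is a bijection $\phi:\cY_1\to\cY_2$ with $A\simeq\phi(A)$ for every equivalence class $A$. One shows the two pieces: (i) the correspondence relation $\simeq$ from Definition~\ref{quid} is preserved under a single application of $^{-}$ and of $^{+}$, and (ii) consequently $P^s_{XY_1^s}\equiv Q^s_{XY_2^s}$ for every $s\in\{-,+\}^n$ by induction on $n$. For (i) one takes $y_i\in A$, $y_j\in A'$ with shifts $x_1,x_1'\in\ff_q$ witnessing the correspondence with $\phi(A),\phi(A')$ in the sense of Definition~\ref{quid}(2), and exhibits an explicit shift for the combined symbol. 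For the `$-$' channel, because $\gamma^{-1}$ is additive one checks that
\[
P^{-}_{X|Y_1^{-}}(x\oplus (\gamma x_1 + x_1')\mid y_i,y_j) = Q^{-}_{X|Y_2^{-}}(x\mid \phi(y_i),\phi(y_j)),
\]
by substituting $u_2 \mapsto u_2 + x_1'$ in the convolution sum and using $\gamma^{-1}((x+\gamma x_1+x_1') - (u_2+x_1')) = \gamma^{-1}(x-u_2) + x_1$; for the `$+$' channel one does the analogous shift in both the numerator and denominator, so that the normalization cancels. The marginal condition (1) of Definition~\ref{quid} is immediate from the product form of $P_{Y^{\pm}}$ and the inductive hypothesis $P_{Y_1}(A)=Q_{Y_2}(\phi(A))$.

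Finally, having preserved $\simeq$ on a class-by-class basis, the bijection $\phi$ on equivalence classes lifts to a bijection $\cY_1^{\pm}\to\cY_2^{\pm}$ on the equivalence classes of the transformed channels, which is exactly the statement $P^{\pm}_{XY_1^{\pm}}\equiv Q^{\pm}_{XY_2^{\pm}}$; iterating along any string $s$ gives the full claim, and equivalent distributions share $H(X|Y)$ (the remark following Definition~\ref{def:eq}), so $H_{P^s}(X|Y_1)=H_{Q^s}(X|Y_2)$. The only real work — and the one place where this differs from Prop.~\ref{thm:simeq} — is bookkeeping the field multiplication by $\gamma$ and $\gamma^{-1}$ through the convolution in $P^{-}_{X|Y^{-}}$; the key observation that makes it go through unchanged is that $x\mapsto\gamma^{-1}x$ commutes with the group operation, so a cyclic-type shift on the inputs of the constituent channels still induces a single shift on the input of the combined channel. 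Since this is routine given the template of Prop.~\ref{thm:simeq}, the proof is omitted.
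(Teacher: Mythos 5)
Your proposal is correct and is exactly the argument the paper has in mind: the paper omits this proof, stating only that it is ``similar to Prop.~\ref{thm:simeq},'' and your write-up supplies precisely that similarity, with the one genuinely new ingredient correctly identified --- multiplication by $\gamma$ (equivalently $\gamma^{-1}$) is an automorphism of the additive group of $\ff_q$, so the witnessing shifts $x\mapsto x\oplus x_1$ in Definition~\ref{quid}(2) propagate through the $G_\gamma$ convolutions to a single shift $z=\gamma x_1\oplus x_1'$ on the synthesized channel. The choice of convention for which of $u,x$ is the input of the combined channel differs immaterially from \eqref{minustrans}, and the rest (marginal condition, induction on $s$) matches the Appendix template.
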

As a result of this statement, it is possible to define an accelerated construction procedure of codes over finite field alphabets 
similar to the algorithm discussed in this section.

\section{Experimental Results}
\label{sim}

There are several options of implementing the alphabet reduction procedures discussed above. The overall idea is to perform cyclic
merging (with no rate loss) and then greedy mass merging for every subchannel in every step $n\ge 1$ of the recursion.
\textcolor{black}{The experimental results show that the rate loss
$\Delta(I(W))$ does not grow linearly in $n$, and taking the output alphabet size $\mu$ a constant independent of $n$ is
sufficient to have a bounded rate loss.}

Greedy mass merging (the function \texttt{degrade} of Algorithm 1) calls for finding a pair of symbols
$y_1,y_2$ whose merging minimizes the rate loss $\tilde\Delta,$ which can be done in time $O(M^2\log M), M:=|\sY|.$ In practice this may be
too slow, so instead of optimizing we can merge the first pair of symbols for which the rate loss is below some chosen threshold $C.$
It is also possible to merge pairs of symbols based on the proximity of probabilities on the RHS of \eqref{merge_ineq}.

Note also that greedy mass merging can be applied to any binary polarizing operation including those described in Sect.~\ref{prelim}.
We performed a number of experiments using addition modulo $q$, the finite field polarization $G_\gamma,$ and a polarizing operation from
\cite{sasoglu4}. A selection of results appears in Fig.~\ref{fig1}.
In Examples 1-3 we construct polar codes for the $q$-ary symmetric channel \eqref{eq:qsc} and the 16 QAM channel, showing the distribution of capacities of the subchannels.
{In Examples 4-6 we apply different polarizing transforms to a channel $W$ with
inputs ${\mathscr X}=\{0,1\}^3$ and outputs ${\sY}=\{0,1\}^3\cup\{?\ast\ast,?\ast,???\},$
where $*$ can be 0 or 1. The transitions are given by
\begin{equation}\label{eq:OEC}
\begin{aligned}
W(x_1 x_2 x_3| x_1 x_2 x_3)=0.3, \;\;
W(? x_2 x_3| x_1 x_2 x_3)=0.2 \\
W(?? x_3| x_1 x_2 x_3)=0.3, \;\;
W(???|x_1x_2x_3)=0.2
\end{aligned}
\end{equation}
for all $x_1,x_2\in\{0,1\}$.
Following  \cite{park_barg}, we call $W$ an {\em ordered erasure channel}. One can observe that under the addition modulo-q transform 
\textcolor{black}{\eqref{eq:+q}--\eqref{eq:-q}} the channel polarizes to several extremal configurations, while under
the transforms given in \eqref{eq:mt}, \eqref{eq:pi} it converges to only two levels. This 
behavior, predicted by the general results cited in Section \ref{prelim}, supports the claim that
the basic algorithm of Sect.~\ref{main} does not depend on (is unaware of) the underlying polarizing transform.
More details about the experiments are provided in the captions to Fig.~\ref{fig1}.}
\textcolor{black}{For all our experiments both included here and left out, the rate loss
$\Delta (I(W))$ is less than $(1/\mu)^{1/(q-1)}$, let
alone less than $n(1/\mu)^{1/(q-1)}.$ This agrees with our claims in Remark \ref{rate_loss} above. }

It is interesting to observe that the $q$-ary symmetric channel for $q=16$ polarizes to two levels 
under Ar{\i}kan's transform. In principle there could be 5 different extremal configurations, and it is a priori unclear 
that no intermediate levels arise in the limit. An attempt to prove this fact was previously made in \cite{sasoglu3}, but
no complete proof is known to this date.

\setcounter{figure}{0}
\begin{figure}
\vspace*{.1in}\centering
\begin{subfigure}[t]{2.1in}
   \centering
   \includegraphics[width=2.0in]{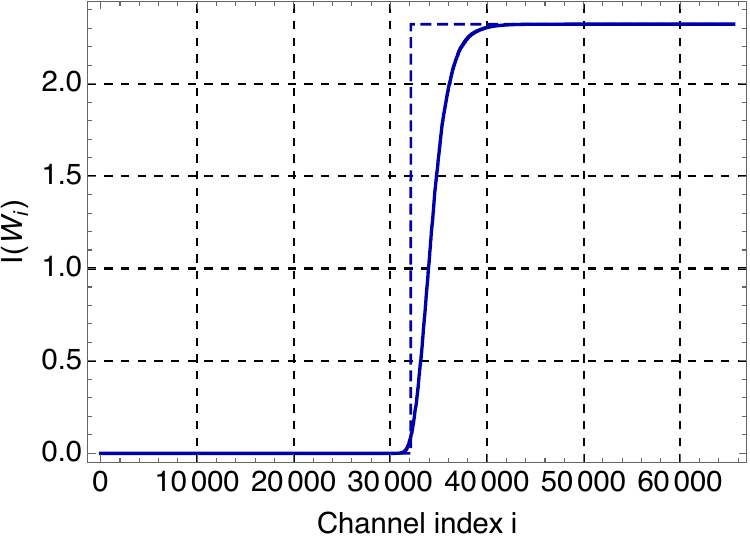}
   \caption{Example 1: $q$SC with $q=5$; $\epsilon=0.2$, $I(W)=1.2,$ 
     $n=16,$ \textcolor{black}{$\mu=200$,}
     $\Delta(I(W))=0.097$}
   \label{fig1:1}  
\end{subfigure}   
  \hspace*{.1in}
\begin{subfigure}[t]{2.1in}
    \centering{\includegraphics[width=1.9in]{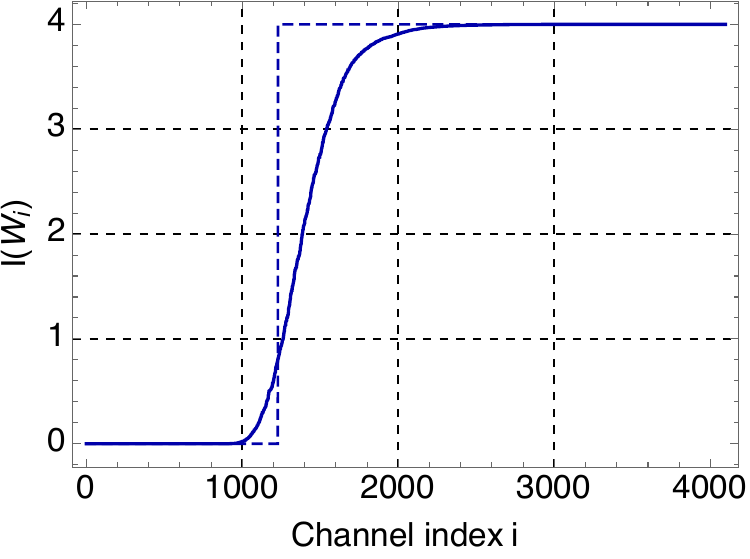}}   
 \caption{Example 2: 16 QAM, SNR=10dB, $I(W)=2.82$, $n=12,$ \textcolor{black}{$\mu=300$,} $\Delta(I(W))=0.2$}\label{fig1:2}
\end{subfigure}   
  \hspace*{.1in}
\begin{subfigure}[t]{2.1in}
   \centering{\includegraphics[width=1.9in]{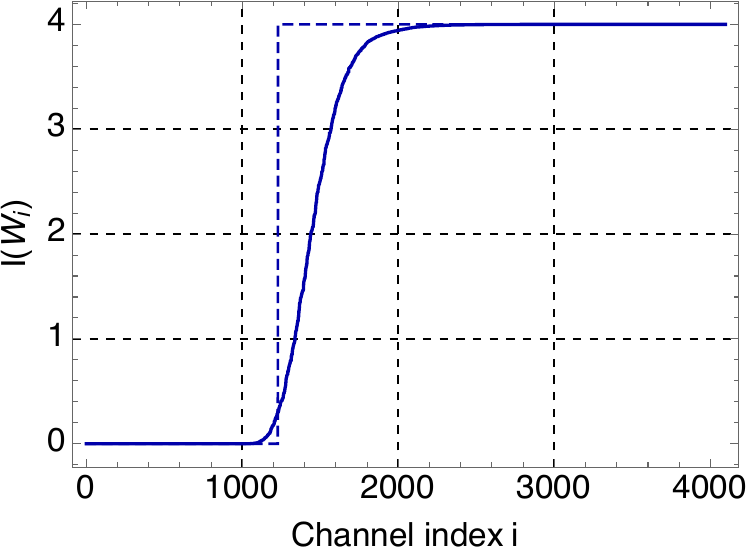}}
  \caption{Example 3: $q$SC with $q=16$; $\epsilon=0.15$, $I(W)=2.804,$ $n=12$,
  \textcolor{black}{$\mu=300$,}
  $\Delta(I(W))=0.23$}
    \label{fig1:3}  
\end{subfigure}  \\[0.05in] 

\begin{subfigure}[t]{2.1in}
  \centering\includegraphics[width=1.9in]{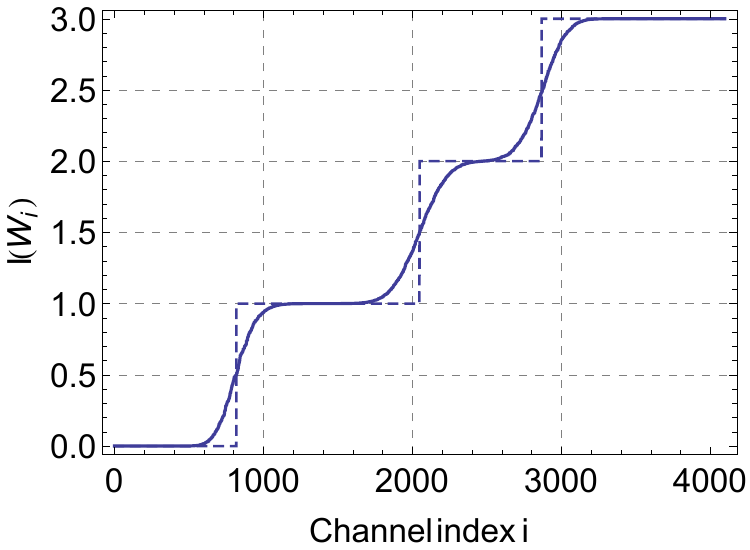}
  \caption{Example 4: OEC with $\epsilon_0=0.3$, $\epsilon_1=0.2$, $\epsilon_2=0.3$, $\epsilon_3=0.2$, $I(W)=1.6,$
   $n=12$, \textcolor{black}{$\mu=200$,}
  $\Delta(I(W))=0$ (in this case there is no approximation loss)
  }\label{fig1:8}
\end{subfigure} \hspace*{.1in}  
\begin{subfigure}[t]{2.1in}
  \centering\includegraphics[width=1.9in]{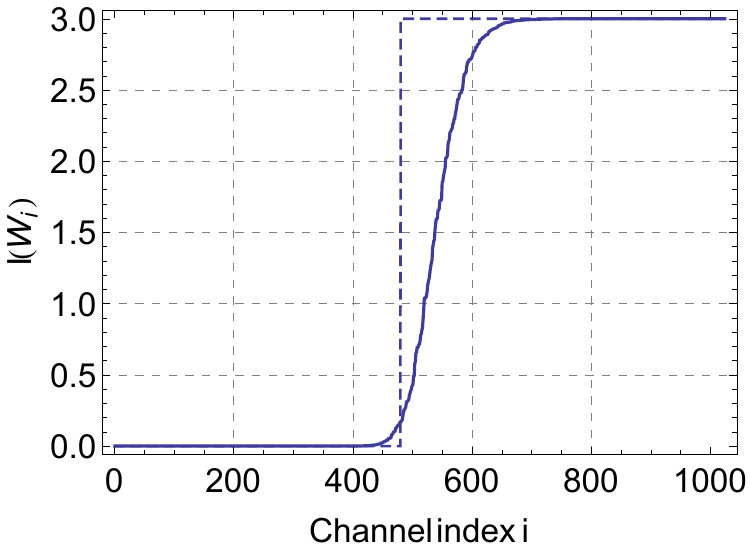}
  \caption{Example 5: The same channel as in Example 4, polarizing transform  \eqref{eq:pi}, $n=10,$
  \textcolor{black}{$\mu=200$,}
  $\Delta(I(W))=0.185$
  }\label{fig1:9}
\end{subfigure} \hspace*{.1in}  
\begin{subfigure}[t]{2.1in}
  \centering\includegraphics[width=1.9in]{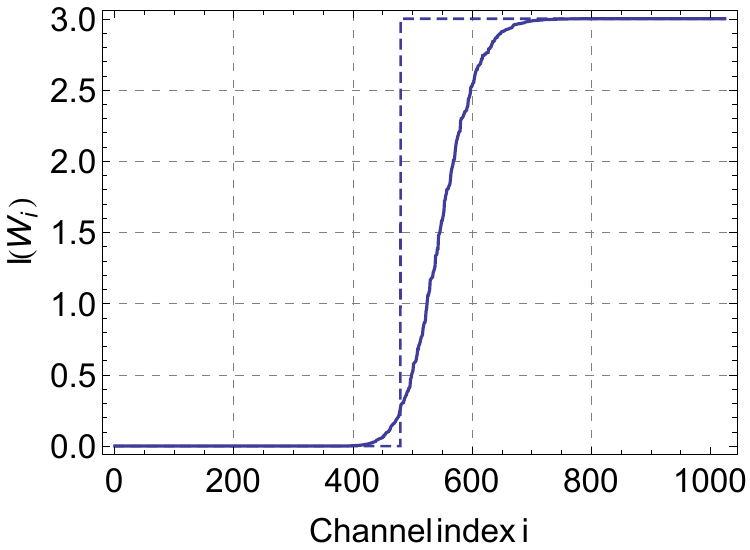}
  \caption{Example 6:
  The same channel as in Example 4, polarizing transform \eqref{eq:mt}, $n=10$,
  \textcolor{black}{$\mu=200$,}
  $\Delta(I(W))=0.216$ 
  }\label{fig1:10}
\end{subfigure} \hspace*{.1in}  
  
                  \caption{\small Construction of nonbinary polar codes. 
                  \textcolor{black}{The dashed step-shaped lines represent the ideal capacity distribution of the subchannels. The gaps between the dashed lines and
                  solid lines stem from unpolarized subchannels and the capacity loss due to channel degrading after each evolution step.}
                  In Fig. (a)-(c) we plot the 
                  capacity distribution of subchannels for channels with $q=5$ and $16$ (in these examples qSC is a $q$-ary symmetric channel defined in \eqref{eq:qsc}). \textcolor{black}{The results
                  suggest that the capacity lose increases when $q$ becomes larger.}
                  In Examples 4-6 we apply
                  different polarizing transforms, showing convergence to different number of extremal configurations for the same
                  channel (here OEC is the ordered erasure channel, see \eqref{eq:OEC})
                  }\label{fig1}
   \end{figure}
 
 Next we give some simulation results to support the conclusions drawn for Algorithm 2. 
We construct polar codes of several block lengths for $q$SC $W$ with $q=4$ and $\epsilon=0.15,$ setting the 
threshold $\mu=256.$ The capacity of the channel equals $I(W)=1.15242.$

{\small 
\begin{table}[h]
\begin{center}
\begin{tabular}{|c|c|c|c|c|c|c|}
\hline
$N$  & $t_1$  & $t_2$  & $\Delta I_1$  & $\Delta I_2$ & $\frac{t_1}{t_2}$ & $\frac{\Delta I_1}{\Delta I_2}$  \\[.03in]
\hline
128 & 404 & 177 & 0.041 & 0.026 & 2.3 & 1.6 \\
\hline
256 & 1038 & 490 & 0.048 & 0.033 & 2.1 & 1.5 \\
\hline
512 & 2256 & 1088 & 0.055 & 0.038 & 2.1 & 1.5 \\
\hline
1024 & 4378 & 2164 & 0.061 & 0.042 & 2.0 & 1.5 \\
\hline
\end{tabular}
\end{center}
\caption{\small The performance comparison of greedy mass
merging and Algorithm 2 for qSC with $q=4$, $\epsilon=0.15$, and $\mu=256.$}
\label{table1}
\end{table}}

In Table \ref{table1}, $N$ is the code length, $t_1$ is the running time of greedy mass merging and $t_2$ is the running
time of Algorithm 2 (our algorithm) in seconds. The quantities $\Delta I_1$ and $\Delta I_2$ represent the rate loss (the gap between $I(W)$ and the average capacity of the subchannels) in greedy mass merging and our algorithm, respectively. 

{\em Binary codes:}  Here our results imply the following speedup of Algorithm A in \cite{tal_vardy}.
Denote $LR(y)=W(y|1)/W(y|0).$ The cyclic merging means that we merge any two symbols $(y_1,y_2)\to \tilde y$ if
$LR(y_1)=LR(y_2)^{\pm 1},$ so we can only record the symbols $y\in \tilde Y$ with $LR(y)\ge 1.$
This implies that the threshold $\mu$ in \cite{tal_vardy} can be reduced to $\mu/2$. Overall the alphabet after the $+$ or $-$ step
is reduced by a factor of about $8$ while the code constructed is exactly the same as in \cite{tal_vardy}. 
In the following table we use the threshold values $\mu=32$ for \cite{tal_vardy} and $\mu=16$ for our algorithm. The codes are
constructed for the BSC with $\epsilon=0.11.$

{\small
\begin{table}
\begin{center}
\begin{tabular}{|c|c|c|c||c|c|c|c|}
\hline
$N$  & $t_A$  & $t_2$  & $\frac{t_A}{t_2}$ &$N$  & $t_A$  & $t_2$  & $\frac{t_A}{t_2}$  \\[.03in]
\hline
512 & 3.6 & 0.5 & 7.2 
&1024 & 7.3 & 1.1 & 6.6 \\
\hline
2048 & 14.7 & 2.3 & 6.4 
&4096 & 29.2 & 4.6 & 6.3 \\
\hline
\end{tabular}
\end{center}
\caption{\small The performance comparison of greedy mass
merging and Algorithm 2 for BSC(0.11).}
\label{table2}
\end{table}}

In Table \ref{table2}, $N$ is the code length, $t_A$ is the running time of Algorithm A in \cite{tal_vardy}, and $t_2$ is the running time of 
our algorithm in seconds. Our algorithm indeed is about 7 times faster, and the codes constructed in both cases are exactly the same.
 
\textcolor{black}{
\begin{remark}
The cyclic alphabet reduction for binary channels shares some similarity to the ideas introduced in \cite[Sect.~VI.C]{arikan2009}. At the same time, \cite{arikan2009} only observes the possibility of reducing the alphabet size without giving a practical alphabet reduction algorithm, while our algorithm can be readily implemented.
\end{remark}
}

 \remove{ 
\begin{example}  
Let $W$ be the $q$-ary symmetric channel ($q$SC) with $q=5$ and 
$\epsilon=0.05$ defined as $W(y|x)=0.8(1-\delta_{x,y})+0.05 \delta_{x,y}$
\remove{\begin{align*}
W(y|x)= \begin{cases}
0.8, &\text{if} \quad y=x \\
0.05, &\text{if} \quad y\neq x.
\end{cases}
\end{align*}}
We have taken $C_1=10$ and $C_2=2$ in Algorithm \ref{euclid2} for this
example. The resultant capacities for $n=8$ and $\mu=200$
are shown by Figure \ref{fig1:1}.
\label{qSC,q=5}
\end{example}

\begin{example} 
Let $W$ be the $q$SC with $q=8$ and $\epsilon=0.03$, namely
\begin{align*}
W(y|x)= \begin{cases}
0.79,  & \text{if}\quad y=x \\
0.03, & \text{if}\quad y\neq x
\end{cases}
\end{align*}

The result we get for this channel is given by Fig.~\ref{fig1:2}. 
We see that the virtual channels polarize to two-levels, one level corresponding to zero capacity,
and the other level corresponding to full capacity. \remove{This result suggests that Ar{\i}kan
transform is a polarizing transform for qSCs even when $q$ is not prime. But, we
do not know exactly whether this conjecture is true or not.}

\remove{The symmetric capacity distribution of the virtual channels for Example \ref{qSC,q=8}, $n=10$, and $\mu=200.$
$C_1$ and $C_2$ are chosen as 10 and 2, respectively.}

\label{qSC,q=8}
\end{example}
\remove{
Both Fig. \ref{fig1:1} and Fig. \ref{fig1:2} shows us that
the virtual channels polarize to two levels. This is consistent with the polarization
theorem stated in \cite{sasoglu3} given that the size of input alphabet size is 5 for both
of the examples, and 5 is a prime number.}

\begin{example} 
Let $W$ be the $q$SC with $q=16$ and $\epsilon=0.01$. In
other words, we have
\begin{align*}
W(y|x)=\begin{cases}
0.85, &\text{if}\quad y=x \\
0.01 &\text{if}\quad y\neq x
\end{cases}
\end{align*}
In this
example, we use the polarizing mapping given by Proposition 2 in \cite{sasoglu4}
rather than the regular Ar{\i}kan's transform. Figure \ref{fig1:3} shows the result
we get. 
\remove{The symmetric capacity distribution of the virtual channels for Example \ref{qSC,q=16,sasoglu}, 
$n=8$, and $\mu=300.$ Both $C_1$ and $C_2$ are chosen as 1.5.}

\label{qSC,q=16,sasoglu}
\end{example}

\begin{example} 
Let $W$ be the same channel as in Example \ref{qSC,q=16,sasoglu} once again. 
Now, we consider the polarizing transform $G_{\gamma}$ over the field
${\mathbb F}_{16}$ introduced in \cite{mori_tanaka}.
The output we have obtained is shown in Figure \ref{fig1:4}. 

\remove{ from
which we clearly see how $G_{\gamma}$ polarizes the virtual
channels to two levels.}

We see from Figures \ref{fig1:3} and \ref{fig1:4} that the
virtual channels polarize to two levels, consistent with the polarization
theorems given in \cite{sasoglu4} and \cite{mori_tanaka}, respectively.

\label{qSC,q=16,mori_tanaka}
\end{example}

\begin{example} 
Let $W:\{0,1,2,3,4\}\to \{0,1,2,3,4\}$ be a ``typewriter channel'' defined by the
transition matrix

\begin{align*}
W(y|x)=
\begin{bmatrix}
        0.5 & 0.5 & 0 & 0 & 0           \\
        0 &  0.5 & 0.5 & 0 & 0           \\
        0  & 0 & 0.5 & 0.5 & 0           \\
        0 & 0 & 0 & 0.5 & 0.5           \\
        0.5 & 0 & 0 & 0 & 0.5
     \end{bmatrix}.
\end{align*}

In other words, $W$ is a channel such that whenever a symbol is transmitted,
with probability $1/2$ the same symbol is observed at the output, and
with probability $1/2$ the next symbol is observed at the output.
For this channel, the result is plotted in Figure \ref{fig1:5}. 

\label{five_by_five_typewriter}
\end{example}

\begin{example} 
Consider an $8\times 8$ typewriter channel with the transitions having probability $0.5$ as in the previous example.
The distribution of capacities of the virtual channels after 10 steps of polarization is given in Fig.~\ref{fig1:6}.
\label{eight_by_eight_typewriter}
\end{example}

\begin{example}  
Let $W:{\mathscr X}\to{\mathscr Y}$ be an ordered
erasure channel (OEC) with 4 inputs, 7 outputs and transition probabilities
$\epsilon_0=0.5$, $\epsilon_1=0.4$, $\epsilon_2=0.1$. More precisely,
we have ${\mathscr X}=\{00,01,10,11\}$ and ${\mathscr Y}=\{00,01,10,11,?0,?1,??\}$,
and the channel is defined as
\begin{align*}
W(x_1 x_2| x_1 x_2)=0.5\\
W(? x_2| x_1 x_2)=0.4 \\
W(??| x_1 x_2)=0.1
\end{align*}
for all $x_1,x_2\in\{0,1\}$.

\remove{We have taken $C_1=2$ and $C_2=1.5$ in Algorithm \ref{euclid2} for this
example.} The resultant capacities for $n=14$ and $\mu=16$
are shown in Figure \ref{fig1:7}. 
As shown in \cite{park_barg}, the evolution of the channel can be calculated exactly without recourse to approximations.
The exact limiting results are shown in Fig.~\ref{fig1:7} by the dashed line. The curve in the figure is computed using 
the approximation algorithm proposed here. 

It is a property of OECs \cite{park_barg} that the symmetric capacity distribution of virtual channels
approach to the PMF $F(i)=\epsilon_{r-i}$, where $r=\log_2 q$,
as $n\to\infty$. 
We see from Fig.~\ref{fig1:7} that the obtained results are consistent with this property.

\label{OEC_4_7}
\end{example}

\begin{example}  
Let $W:{\mathscr X}\to{\mathscr Y}$ be an OEC with 8 inputs, 15 outputs and transition probabilities
$\epsilon_0=0.3$, $\epsilon_1=0.2$, $\epsilon_2=0.3$, $\epsilon_3=0.2$,
which implies
\begin{align*}
W(x_1 x_2 x_3| x_1 x_2 x_3)=0.3 \\
W(? x_2 x_3| x_1 x_2 x_3)=0.2 \\
W(? ? x_3| x_1 x_2 x_3)=0.3 \\
W(? ? ?| x_1 x_2 x_3)=0.2
\end{align*}

\remove{
Choosing $C_1=2$ and $C_2=1.5$ in Algorithm \ref{euclid2} similarly to
the previous example gives us the plot illustrated in Figure \ref{fig1:8}.
The other parameters are $n=12$ and $\mu=200.$ 
}

For this channel, we get the plot illustrated by Figure \ref{fig1:8}, showing us
multilevel polarization similarly to Example \ref{OEC_4_7}.

To compare our algorithm with the existing results, we tried to implement 
the polar code construction scheme proposed in \cite{tal_sharov_vardy}. 
Because of the binning algorithm that this scheme involves,
the output alphabet size limit $\mu$ has to be of the form $\mu=k^q$
for some even integer $k.$ In this example, we have $q=8$, 
and thus the smallest two values that $\mu$ can take are $256$ and $65536$. 
For $\mu=256$, the average capacity loss of $20\%$
is observed even for $n=2$ (since in each step we perform approximations, the loss can only increase with $n$). 
Choosing $\mu=65536,$ we run into implementation problems such as the memory allocation error after a few steps of the recursion.

\remove{
We have also tried the polar code construction scheme proposed in \cite{tal_sharov_vardy}
for this example. Because of the binning algorithm that this scheme involves,
the output alphabet size limit $\mu$ have to be always of the form $\mu=k^l$
for some even integer $k$, where $l$ is the input alphabet size. In this example,
we have $l=8$, and thus the smallest two values that $\mu$ can take is
256 and 65536. For $\mu=256$, an average symmetric capacity loss of $20\%$
is observed even for $n=2$. On the other hand, by choosing $\mu=65536$ we run into implementation 
(memory allocation) errors after a few steps of evolution.
Thus, it appears that in this example the approximation algorithm proposed in this paper performs better than the algorithm of \cite{tal_sharov_vardy}. }
\label{OEC_8_15}
\end{example}

\begin{example}  
Let $W$ be the same channel as in Example \ref{OEC_8_15}. In this
example, we use the polarizing mapping given by Proposition 2 in \cite{sasoglu4}
rather than the regular Ar{\i}kan's transform. Figure \ref{fig1:9} shows the result
we get.

Comparing Figure \ref{fig1:8} with Figure \ref{fig1:9}, we observe the polarizing effect of 
the mapping introduced in \cite{sasoglu4}.
\label{OEC_8_15_sasoglu} 
\end{example}

\begin{example} 
Let $W$ be the same channel as in Example \ref{OEC_8_15} once again. 
Now, we consider the polarizing transform $G_{\gamma}$ over the field
${\mathbb F}_8$ introduced in \cite{mori_tanaka}.
The output we have obtained is shown in Figure \ref{fig1:10}, from
which we clearly see how $G_{\gamma}$ polarizes the virtual
channels to two levels.
\label{OEC_8_15_mori_tanaka}
\end{example}

\begin{example}
Let $W:{\mathscr X}\to{\mathscr Y}$ be the q-ary erasure channel (qEC) with $q=8$ and
$\epsilon=0.5$. In this case,  we have ${\mathscr X}=\{0,1,2,3,4,5,6,7\}$, 
${\mathscr Y}=\{0,1,2,3,4,5,6,7,?\}$, and $W(y|x)$ has the form
\begin{align*}
W(y|x)=
\begin{cases}
0.5, &\text{if}\quad y=x \\
0.5, &\text{if}\quad y=?
\end{cases}
\end{align*}

This channel has the capacity curve given in Figure \ref{fig1:11}.
In fact, it is easy to see that this channel is an OEC at the same time
having the parameters $\epsilon_0=0.5, \epsilon_1=0, \epsilon_2=0, \epsilon_3=0.5$.
Hence, the virtual channels polarize to two levels, as expected.
\label{qEC,q=8}
\end{example}

\begin{example}
Let $W$ be an ordered symmetric channel (OSC) with input alphabet
size 8, and the parameters $\epsilon_0=0.8$, $\epsilon_1=0.1$, $\epsilon_2=0.1$, and
$\epsilon_3=0$. More precisely, $W$ is defined as
\begin{align*}
W( x_1 x_2 x_3 | y_1 y_2 y_3)=
\begin{cases}
0.8, &\text{if}\quad x_1=y_1, x_2=y_2, x_3=y_3 \\
0.1, &\text{if}\quad x_1\neq y_1, x_2=y_2, x_3=y_3 \\
0.05, &\text{if}\quad x_2\neq y_2, x_3=y_3 \\
0, &\text{if}\quad x_3\neq y_3 
\end{cases}
\end{align*}

For this channel, the capacity curve we get is given by Figure \ref{fig1:12}.
We see that the virtual channels polarize to two levels, one level being one bit of capacity,
and the other level being three bits of capacity.
We do not have zero bit of capacity as one of the levels in this example.
\remove{
Two-level polarization is observed in this example as well. However, we do not
have zero capacity as one of the levels. We think this situation is due to the fact that $\epsilon_3$
is taken as $0$ in the definition of $W$. However, we do not have a rigorous proof for that.
}
\label{OSC_8_by_8}
\end{example}

\begin{example}
The last example that we consider is the {\em binary}
symmetric channel BSC$(p)$ with crossover probability $p=0.11$. The capacity of this
channel is $I(W)\approx 0.5$. For this channel, we compare our algorithm with the
bin-and-merge algorithm in \cite{sasoglu5} and the channel degradation
algorithm in \cite{tal_vardy} (called the ``greedy mass merging algorithm''
in \cite{sasoglu5,pedarsani}, as explained above.) For $C_1=10$ and $C_2=3$, the results are provided in Table \ref{table1} and
Table \ref{table2}.

It is to be expected there is a gain in merging output symbol pairs optimally at each step
compared to the \texttt{degrading\textunderscore merge} function we have defined.
Note that this gain decreases as the number of iterations $n$ or the number
of quantization levels $\mu$ increases.
We also observe from Tables \ref{table1} and \ref{table2} that our algorithm is slightly inferior to the
bin-and-merge algorithm for binary-input channels.

\begin{table}[h!]
\begin{tabular}{l*{6}{c}}
$n$             & 5 & 8 & 11 & 14 & 17  & 20  \\
\hline
Greedy mass merging, degrade \cite{tal_vardy} &  0.1250 & 0.2109 & 0.2969 & 0.3620 & 0.4085 & 0.4403  \\
Bin and merge, degrade \cite{sasoglu5}             &  0.1250 & 0.1836 & 0.2422 & 0.3063 & 0.3626 & 0.4051  \\
Our algorithm                                                      &  0.0625 & 0.1446 & 0.2188 & 0.2853 & 0.3385 & 0.3830 \\
\hline
\end{tabular}
\vspace{0.1in}
\caption{The highest rate $R$ for which the sum error probability of the $2^n R$ most reliable 
approximate channels (out of the $2^n$) is at most $10^{-3}$.}
\label{table1}
\end{table}

\begin{table}[h!]
\begin{tabular}{l*{5}{c}}
$\mu$             & 4 & 8 & 16 & 32 & 64  \\
\hline
Greedy mass merging, degrade \cite{tal_vardy} &  0.3667 & 0.3774 & 0.3795 & 0.3799 & 0.3800  \\
Bin and merge, degrade \cite{sasoglu5}             &  0.3019 & 0.3134 & 0.3264 & 0.3343 & 0.3422 \\
Our algorithm                                                      &  0.2573 & 0.2775 & 0.3046 &  0.3191 & 0.3369 \\
\hline
\end{tabular}
\vspace{0.1in}
\caption{The highest rate $R$ for which the sum error probability of the $2^n R$ most reliable 
approximate channels is at most $10^{-3}$ with $\mu$ quantization levels and $n=15$ recursions. }
\label{table2}
\end{table}

\remove{
It is to be expected that the performance of our algorithm is inferior to the degrading algorithm of \cite{tal_vardy},
given that the way we choose the two output symbols to be merged at each step is suboptimal.
We also observe from Table \ref{table1} and \ref{table2} that our algorithm is slightly inferior to the
bin-and-merge algorithm for binary-input channels. The performance of our algorithm can be improved relying on 
greedy mass merging for nonbinary alphabets.}
\label{binary}
\end{example}
}
\section{Conclusion}
\label{conclusion}

We considered the problem of constructing polar codes
for nonbinary alphabets. Constructing polar codes has been a difficult open question since the introduction
of the binary polar codes in \cite{arikan2009}. Ideally, one would like to
obtain an explicit description of the polar codes for a given block length, but this seems
to be beyond reach at this point. As an alternative, one could attempt to construct the code by approximating
each step of the recursion process. For binary codes, this has been done in \cite{tal_vardy},\cite{pedarsani}, but
extending this line of work to the nonbinary case was an open problem despite several attempts in the literature.
We take this question one step closer to the solution by designing an algorithm that approximates
the construction for moderately-sized input alphabets such as $q=16.$
The algorithm we implement works for both binary and non-binary channels with complexity $O(N\mu^2 \log \mu)$,
where $N$ is the blocklength and $\mu$ is the parameter that limits the output alphabet size.
Furthermore, the error estimate the we derive generalizes the estimate of \cite{pedarsani} 
to the case of nonbinary input alphabets (but relies on a different proof method).
It is also interesting to note that the error is rather close to a {\em lower bound} for this type
of construction algorithms, derived recently in \cite{tal}.
Apart from presenting a theoretical advance, this algorithm provides a useful tool in the
analysis of properties of various polarizing transforms applied to nonbinary codes over
alphabets of different structure. The proposed construction algorithm also brings nonbinary 
codes closer to practical applications, which is another promising direction to be explored
in the future.

\section*{Appendix: Proof of Prop.~\ref{thm:simeq} }
We will show that if $P_{XY_1}\equiv Q_{XY_2},$ then $P^-_{XY_1^-}\equiv Q^-_{XY_2^-}$ and $P^+_{XY_1^+}\equiv Q^+_{XY_2^+},$
which will imply the full claim by induction on $n.$\\

(a) (The `$-$' case) The distributions $P^-_{XY_1^-}$ and $Q^-_{XY_2^-}$ are defined on the sets 
$\sX\times \sY_1^2$ and $\sX\times\sY_2^2,$ respectively. In order to prove 
that $P^-_{XY_1^-}\equiv Q^-_{XY_2^-},$ we need to show that for every $A_1,B_1\in\cY_1,$ we have $A_1\times B_1\simeq \phi(A_1)\times \phi(B_1).$ Indeed, 
\begin{align*}
\sum_{(y_1,y_2)\in A_1\times B_1}P^-_{Y_1^-}((y_1,y_2))&=\sum_{y_1\in A_1}\sum_{y_2\in B_1}P^-_{Y_1^-}((y_1,y_2))\\
&=\sum_{y_1\in A_1}\sum_{y_2\in B_1}P_{Y_1}(y_1)P_{Y_1}(y_2)\\
&=\Big(\sum_{y_1\in A_1}P_{Y_1}(y_1)\Big)\Big(\sum_{y_2\in B_1}P_{Y_1}(y_2)\Big).
\end{align*}
Similarly,
$$
\sum_{(y_1,y_2)\in \phi(A_1)\times \phi(B_1)}Q^-_{Y_2^-}((y_1,y_2))=\Big(\sum_{y_1\in \phi(A_1)}Q_{Y_2}(y_1)\Big)\Big(\sum_{y_2\in \phi(B_1)}Q_{Y_2}(y_2)\Big).
$$
Since $A_1\simeq \phi(A_1)$ and $B_1\simeq \phi(B_1),$ we have $P_{Y_1}(A)=Q_{Y_2}(\phi(A)_1)$ and
$P_{Y_1}(B)=Q_{Y_2}(\phi(B)_1).$
Therefore,
   $$
\sum_{(y_1,y_2)\in A_1\times B_1}P^-_{Y_1^-}((y_1,y_2))=\sum_{(y_1,y_2)\in \phi(A_1)\times \phi(B_1)}Q^-_{Y_2^-}((y_1,y_2)).
   $$
Thus $A_1\times B_1$ and $\phi(A_1)\times \phi(B_1)$ satisfy condition (1) in Def. \ref{quid}. 

To prove condition (2), choose $y_1\in A_1,y_2\in B_1,$ and let $y_3\in \phi(A_1)$ and $y_4\in\phi(B_1).$ By Def. \ref{quid}, there exist $x_1$ and $x_2$ such that
$P_{X|Y_1}(x\oplus x_1|y_1)=Q_{X|Y_2}(x|y_3)$ and $P_{X|Y_1}(x\oplus x_2|y_2)=Q_{X|Y_2}(x|y_4)$ for all $x\in\sX.$ Thus
\begin{align*}
Q^-_{X|Y_2^-}(x|(y_3,y_4))&=\sum_{u_2\in\sX}Q_{X|Y_2}(x\oplus u_2|y_3)Q_{X|Y_2}(u_2|y_4)\\
&=\sum_{u_2\in\sX}P_{X|Y_1}(x\oplus u_2\oplus x_1|y_1)P_{X|Y_1}(u_2\oplus x_2|y_2)\\
&=\sum_{u_2\in\sX}P_{X|Y_1}((x\oplus z)\oplus u_2|y_1)P_{X|Y_1}(u_2|y_2)\\
&=P^-_{X|Y_1^-}(x\oplus z|(y_1,y_2)),
\end{align*}
where $z=x_1\oplus(- x_2)$.
Therefore, $A_1\times B_1\simeq \phi(A_1)\times \phi(B_1),$ and $P_{XY_1}\equiv Q_{XY_2}.$\\

(b). ({\em The `$+$' case}) The distribution $P^+_{XY_1^+}$ and $Q^+_{XY_2^+}$ are over 
$\sX\times(\sX\times\sY_1^2)$ and $\sX\times(\sX\times\sY_2^2)$ respectively. 
Similarly to case (a) above, we will show that for every $A_1,B_1\in\cY_1$ there exist permutations 
$\pi_{y_1,y_2}$ and $\pi_{y_3,y_4}$ on $\sX$ such that for every $u\in \sX$
 $$
   \{(\pi_{y_1,y_2}(u),y_1,y_2):y_1\in A_1,y_2\in B_1\}\simeq \{(\pi_{y_3,y_4}(u),y_3,y_4):y_3\in \phi(A_1),y_4\in \phi(B_1)\}
 $$ 
To show this, fix $A_1,B_1\in\cY_1$ and choose 
some $z_1\in A_1,z_2\in B_1,y_1\in A_1,y_2\in B_1,$ $y_3\in \phi(A_1)$ and $y_4\in\phi(B_1).$
By Def. \ref{quid},  for every $x\in \sX$ there exist $x_1,x_2,x_3$ and $x_4$ such that
    \begin{gather*}
    P_{X|Y_1}(x\oplus x_1|z_1)=P_{X|Y_1}(x|y_1), \quad 
    P_{X|Y_1}(x\oplus x_2|z_2)=P_{X|Y_1}(x|y_2)\\
    P_{X|Y_1}(x\oplus x_3|z_1)=Q_{X|Y_2}(x|y_3),\quad
    P_{X|Y_1}(x\oplus x_4|z_2)=Q_{X|Y_2}(x|y_4).
    \end{gather*}
For $x\in \sX$ define permutations $\pi_{y_1,y_2},\pi_{y_3,y_4}$ as $\pi_{y_1,y_2}(x)=-x_1\oplus x\oplus x_2$ 
and $\pi_{y_3,y_4}(x)=-x_3\oplus x\oplus x_4.$  We compute
      \begin{align*}
    P^+_{X|Y_1^+}(x|(\pi_{y_1,y_2}(u),y_1,y_2))&=P^+_{X|Y_1^+}(x|(-x_1\oplus x_2 \oplus u,y_1,y_2))\\
     &=\frac{P_{X|Y_1}(-x_1 \oplus x_2 \oplus x\oplus u|y_1)P_{X|Y_1}(x|y_2)}
       {\sum_{x_0\in\sX}P_{X|Y_1}(-x_1 \oplus x_2 \oplus x_0\oplus u|y_1)P_{X|Y_1}(x_0|y_2)}\\
     &=\frac{P_{X|Y_1}(x\oplus x_2\oplus u|z_1)P_{X|Y_1}(x\oplus x_2|z_2)}
          {\sum_{x_0\in\sX}P_{X|Y_1}(x_0\oplus x_2\oplus u|y_1)P_{X|Y_1}(x_0\oplus x_2|y_2)}\\
     &=P^+_{X|Y_1^+}(x\oplus x_2|(u,z_1,z_2)).
       \end{align*}
Similarly,
    $$
Q^+_{X|Y_2^+}(x|(\pi_{y_3,y_4}(u),y_3,y_4))=P^+_{X|Y_1^+}(x\oplus x_4|(u,z_1,z_2)).
    $$
The last two equations imply that
     $$
P^+_{X|Y_1^+}(-x_2\oplus x_4\oplus x|(\pi_{y_1,y_2}(u),y_1,y_2))=Q^+_{X|Y_2^+}(x|(\pi_{y_3,y_4}(u),y_3,y_4)),
     $$
which verifies condition (2) in Def.~\ref{quid}.
Let us check that condition (1) is satisfied as well. We have
       \begin{align*}
     P^+_{Y_1^+}(\{(\pi_{y_1,y_2}(u),y_1,y_2): &\,y_1\in A_1,y_2\in B_1\})
                      =\sum_{y_1\in A_1,y_2\in B_1}P^+_{Y_1^+}((\pi_{y_1,y_2}(u),y_1,y_2))\\
    =&\sum_{y_1\in A_1,y_2\in B_1}P_{Y_1}(y_1)P_{Y_1}(y_2)\sum_{x\in\sX}
                P_{X|Y_1}(- x_1\oplus x_2\oplus x\oplus u|y_1)P_{X|Y_1}(x|y_2)\\
    =&\sum_{y_1\in A_1,y_2\in B_1}P_{Y_1}(y_1)P_{Y_1}(y_2)\sum_{x\in\sX}P_{X|Y_1}(u\oplus x_2\oplus x|z_1)
                P_{X|Y_1}(x\oplus x_2|z_2)\\
    =&\Big(\sum_{x\in\sX}P_{X|Y_1}(u\oplus x|z_1)P_{X|Y_1}(x|z_2)\Big)
               \Big(\sum_{y_1\in A_1}P_{Y_1}(y_1)\Big)\Big(\sum_{y_2\in B_1}P_{Y_1}(y_2)\Big)
      \end{align*}
and
\begin{align*}
Q^+_{Y_2^+}(\{(\pi_{y_3,y_4}(u),y_3,y_4):&\;y_3\in \phi(A_1),y_4\in \phi(B_1)\})\\
=&\Big(\sum_{x\in\sX}P_{X|Y_1}(u\oplus x|z_1)P_{X|Y_1}(x|z_2)\Big)\Big(\sum_{y_3\in \phi(A_1)}Q_{Y_2}(y_3)\Big)\Big(\sum_{y_4\in \phi(B_1)}Q_{Y_2}(y_4)\Big).
\end{align*}
     By assumption $P_{Y_1}(A_1)=Q_{Y_2}(\phi(A_1))$ and $P_{Y_1}(B_1)=Q_{Y_2}(\phi(B_1)),$ so this proves that
$$
P^+_{Y_1^+}(\{(\pi_{y_1,y_2}(u),y_1,y_2):y_1\in A_1,y_2\in B_1\})=Q^+_{Y_2^+}(\{(\pi_{y_3,y_4}(u),y_3,y_4):y_3\in \phi(A_1),y_4\in \phi(B_1)\}).
$$
Thus for every $u\in \sX$
    $$
    \{(\pi_{y_1,y_2}(u),y_1,y_2):y_1\in A_1,y_2\in B_1\}\simeq \{(\pi_{y_3,y_4}(u),y_3,y_4):y_3\in \phi(A_1),y_4\in \phi(B_1)\}
    $$ 
The proof is complete.

\end{document}